\def\BibTeX{{\rm B\kern-.05em{\sc i\kern-.025em b}\kern-.08em
    T\kern-.1667em\lower.7ex\hbox{E}\kern-.125emX}}
\newtheoremstyle{sltheorem}
{}                
{}                
{}        
{10pt}                
{\bfseries}       
{:}               
{ }               
{}                
\theoremstyle{sltheorem}
\newtheorem{assumption}{Assumption}
\newtheorem{theorem}{Theorem}
\newtheorem{definition}{Definition}
\newtheorem{corollary}{Corollary}
\begin{document}

\title{The Transmission Value of Energy Storage and Fundamental Limitations}
\author{Qian Zhang,~\IEEEmembership{Student Member, IEEE,} P. R. Kumar,~\IEEEmembership{Life Fellow, IEEE}, and Le Xie,~\IEEEmembership{Fellow, IEEE} 
\thanks{\indent The authors are with the Department of Electrical and Computer Engineering, Texas A\&M University, College Station, TX, 77843 USA (e-mail: zhangqianleo@tamu.edu; prk@tamu.edu; le.xie@tamu.edu).}}

\markboth{Journal of \LaTeX\ Class Files,~Vol.~18, No.~9, September~2020}%
{How to Use the IEEEtran \LaTeX \ Templates}

\maketitle

\begin{abstract}
This study addresses the transmission value of energy storage in electric grids. The inherent connection between storage and transmission infrastructure is captured from a ``cumulative energy" perspective, which enables the reformulating of the conventional optimization problem by employing line power flow as the decision variable. The study also establishes the theoretical limitations of both storage and transmission lines that can be replaced by each other, providing explicit closed-form expressions for the minimum capacity needed. As a key departure from conventional practice in which transmission lines are designed according to the peak power delivery needs, with sufficient storage capacity, the transmission line capacity can be designed based on the \emph{average} power delivery needs. The models of this paper only rely on a few basic assumptions, paving the way for understanding future storage as a transmission asset market design. Numerical experiments based on 2-bus, modified RTS 24-bus, RTS-GMLC, and Texas synthetic power systems illustrate the results.
\end{abstract}

\begin{IEEEkeywords}
Energy storage, transmission planning, fundamental limitations, contingency analysis
\end{IEEEkeywords}

\section{Introduction}
The last decade has witnessed remarkable developments in grid-scale energy storage, both in technological innovation and the participation of multiple electricity markets. On the technological front, reduced construction and operational costs, amplified capacity, higher energy density, and extended duration characterize the trajectory of various energy storage types, including chemical batteries, pumped storage hydro (PSH), and compressed-air energy storage (CAES) \cite{schmidt2019projecting,viswanathan20222022}. Notably, the proliferation of stationary battery storage has become more pronounced in recent years, attributed to its adaptable sizing, ease of installation, and breakthroughs in materials science, such as Li-ion iron phosphate (LFP), Li-ion nickel manganese cobalt (NMC), Lead-acid, etc. While currently battery storage may be economically inefficient for large-capacity or long-duration applications compared to PSH and CAES, ongoing efforts aim to enhance grid-scale battery storage duration and cycles \cite{chen2018manganese,zhu2022rechargeable}.\\
\indent Simultaneously, diverse market mechanisms for enhancing the usage of energy storage are also evolving across various electricity markets in the world. Beyond conventional arbitrage revenue from the energy market, new markets for frequency regulation, voltage support, and reserve capacity are emerging \cite{shi2017using,krata2018real,zheng2023energy}. This expanded range of market choices empowers storage owners to selectively participate in specific scenarios, providing an increasingly appealing prospect for investors \cite{stephan2016limiting,braeuer2019battery}. \\
\indent Transmission lines, as critical elements of the power system, are essential for conveying energy from supply sources to demand centers. Despite the deregulated electricity market implementing new products, like Financial Transmission Rights (FTRs), to encourage investment in transmission lines, the pace of transmission capacity expansion is still not enough for rapid renewable energy integration, resulting in significant curtailment and congestion issues \cite{millstein2021solar}. On the other hand, since transmission infrastructure is traditionally designed to meet peak demand and provide emergency backup, it often goes underutilized most of the time.\\
\indent The process of constructing new high-voltage transmission lines in the United States is lengthy, often exceeding 10 years, due to the complex permit processes and the challenge of cost allocation among numerous stakeholders. In response, several recent technological advances aim to enhance transmission capacity by improving the efficiency of existing lines. These advancements, known as Grid Enhancing Technologies (GETs), include innovations like dynamic line ratings and topology optimization, as highlighted in recent studies and reports \cite{USDOE2022GET,abboud2022guide}.\\
\indent Recently, the concept of using storage to increase the utilization of transmission assets has gained substantial attention, signifying a paradigm shift in the role of energy storage \cite{twitchell2022enabling,kumaraswamy2019redrawing, California,pjm}. However, the precise relationship and limitations between transmission lines and energy storage remain ambiguous, creating barriers to fully integrating storage as a transmission asset in the market. \\
\indent The interplay between energy storage and transmission lines was initially examined through joint optimization problems that considered the expansion of both storage and transmission capabilities, employing stochastic and multistage modeling approaches \cite{pandvzic2014near, singh2015storage, qiu2016stochastic, dvorkin2017co}. Despite these efforts, the variability in assumptions and system parameter configurations makes it challenging to derive any universally applicable conclusions from their sensitivity analysis results. For example, the simulation in \cite{qiu2016stochastic} shows the transmission capacity has an almost negligible effect on the total available energy storage capacities. In contrast, the transmission expansion may reduce storage penetration after considering different capital cost scenarios of storage in \cite{dvorkin2017co}. Additionally, incorporating contingency analysis into these co-optimization schemes presents a significant challenge \cite{twitchell2022enabling}.\\
\indent We aim to uncover the intrinsic relationship between energy storage and transmission infrastructure, particularly highlighting the transmission value of energy storage alongside its fundamental limitations. The model for energy storage and transmission lines presented in this paper is deliberately simplified to prevent narrowly focused results. We eschew focusing on particular systems and technologies. However, care has been taken to preserve the distinctive characteristics that set them apart from other elements within the power grid. The contributions of this work are the following:
{\begin{enumerate}
    \item Enhancing the understanding of the transmission value of storage by viewing the conventional optimization problem from a ``cumulative energy" perspective, defined in the sequel.
    \item Establishing the fundamental limitations of both transmission lines and storage. 
    \item Providing an explicit closed-form expression of the minimum storage capacity and the minimum line capacity needed under these limits, respectively.
    \item Proposing a tractable contingency analysis method to select critical scenarios.
\end{enumerate}} 
\indent We specifically defer future market design considerations of storage as a transmission asset \cite{khastieva2019value,brown2020motivating} to our forthcoming work. The remainder of this paper is organized as follows: Section~\ref{sec:Problem} describes the modeling assumptions and the conventional optimization problem. The main results of this paper are concentrated in Section \ref{sec:res}, including the reformulated problem under the cumulative energy perspective and the fundamental limitations. In Section \ref{sec:dis}, some practical issues like scenario generation and contingency analysis are discussed. We illustrate the transmission value of energy storage and demonstrate the correctness of the fundamental limits theorem on two-bus, modified RTS 24-bus, RTS-GMLC, and Texas synthetic power systems in Section~\ref{sec:case}.\\
\indent Some key symbols used are listed in TABLE \ref{symbol}.
\begin{table}[H] 
\caption{Notation}
\scalebox{0.9}{
\begin{tabular}{ll} 
\toprule
\multicolumn{1}{c}{Symbol}     & \multicolumn{1}{c}{Description}                                         \\ \midrule
\multirow{2}{*}{$\alpha_{ij}$} & \multirow{2}{*}{\begin{tabular}[c]{@{}l@{}}Discounted installation cost per unit of power for\\ transmission line connecting $i$ and $j$\end{tabular}} \\
                               &                                                                                                                                              \\
$\beta_i$              & Discounted installation cost per unit of storage at bus $i$                                \\             $\gamma^+_i$           & Cost per unit of generation curtailment at bus $i$                              \\
$\gamma^-_i$           & Cost per unit of load shedding at bus $i$   \\  
$C_{ij}=C_{ji}$               & Existing capacity of transmission line connecting $i$ and $j$       \\
$c_{ij}=c_{ji}$               & The capacity of new transmission line connecting $i$ and $j$                   \\
\multirow{2}{*}{$E_i[t]$} & \multirow{2}{*}{\begin{tabular}[c]{@{}l@{}}Total energy produced up to time slot $t$ at bus $i$\\  \emph{before} curtailment or load shedding\end{tabular}} \\
                               &                                 \\
\multirow{2}{*}{$\underline{E}_i[t]$} & \multirow{2}{*}{\begin{tabular}[c]{@{}l@{}}Total energy produced up to time slot $t$ at bus $i$\\  \emph{after} curtailment or load shedding\end{tabular}} \\
                               &                                 \\
\multirow{2}{*}{$f_{ij}[t] = -f_{ji}[t]$} & \multirow{2}{*}{\begin{tabular}[c]{@{}l@{}}Power transferred from bus $i$ to bus $j$ in time slot $t$\\  in the presence of storage \end{tabular}} \\
                               &                                   \\ 
\multirow{2}{*}{$f'_{ij}[t] = -f'_{ji}[t]$} & \multirow{2}{*}{\begin{tabular}[c]{@{}l@{}}Power transferred from bus $i$ to bus $j$ in time slot $t$\\  in the presence of storage\end{tabular}} \\
                               &                                   \\ 
\multirow{2}{*}{$F_{ij}[t] = -F_{ji}[t]$ } & \multirow{2}{*}{\begin{tabular}[c]{@{}l@{}}Total energy transferred from bus $i$ to bus $j$ up to\\time slot $t$ in the presence of storage\end{tabular}} \\
                               &                                                                                                                                              \\
\multirow{2}{*}{$F'_{ij}[t] = -F'_{ji}[t]$} & \multirow{2}{*}{\begin{tabular}[c]{@{}l@{}}Total energy transferred from bus $i$ to bus $j$ up to\\time slot $t$ in the presence of storage\end{tabular}} \\
                               &                                                                                                                                              \\
$p_i[t]$               & Power injected at bus $i$ in slot $t$                                          \\
\multirow{2}{*}{$p^+_i[t]$, $p^-_i[t]$} & \multirow{2}{*}{\begin{tabular}[c]{@{}l@{}}Power produced/consumed at bus $i$ in time slot $t$\\  \emph{before} curtailment or load shedding\end{tabular}} \\
                               &                                   \\ 
\multirow{2}{*}{$\underline{p}^+_i[t]$, $\underline{p}^-_i[t]$} & \multirow{2}{*}{\begin{tabular}[c]{@{}l@{}}Power produced/consumed at bus $i$ in time slot $t$\\  \emph{after} curtailment or load shedding\end{tabular}} \\
                               &                                   \\ 
$S_i$                  & The energy capacity of storage at bus $i$                      \\      \bottomrule                      
\end{tabular}}
\label{symbol}
\end{table}

\section{Problem Formulation} \label{sec:Problem}
\subsection{Storage Model}
A slotted time scheme is introduced to capture the storage SoC behavior in discrete time. Consider a charging/discharging cycle length $T$ divided into $N$ slots: $\mathcal{T} := \{1,...,N\}$, with each of slot length $h=\frac{T}{N}$. This is illustrated in Fig. \ref{timeslot}.
\begin{figure}[H] 
\centering
  \includegraphics[scale=0.7]{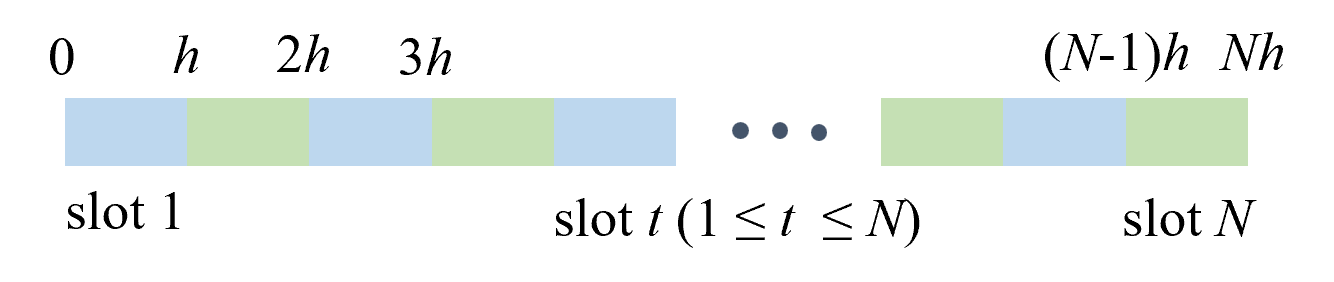}
  \caption{The time slots in a period}
  \label{timeslot}
\end{figure}
At the end of each cycle $\mathcal{T}$, the SoC is required to return to the same value, which is hereafter called the SoC balance assumption in this paper.
\begin{assumption}[SoC Balance]\label{as}
Let $x_{i}[t]\geq0$ represent the SoC of storage installed at bus $i$ after slot $t$, with $x_{i}[0]$ denoting the initial SoC of storage at bus $i$ before cycle $\mathcal{T}$. We require the storage to maintain SoC balance after each charging and discharging cycle $\mathcal{T}$, i.e.,
\begin{equation}\label{as22}
x_{i}[0]=x_{i}[N].
\end{equation} 
\end{assumption}
The selection of period duration $T$ can depend on both the system operator's requirements and the storage technologies. To avoid load-shedding with storage, the choice of $T$ should be large enough so that the total energy generated in $\mathcal{T}$ is greater or equal to the total energy consumed in $\mathcal{T}$. In practice, different storage techniques may use different period lengths $T$ \cite{viswanathan20222022,schmidt2019projecting}. For instance, pumped hydro and compressed air storage typically have a longer duration with less charging/discharging frequency than chemical energy storage. The choice of slot length $h$ should be small enough to make sure the power generated or consumed does not vary much within each time slot.\\
\indent To capture the transmission benefit of storage, only the SoC balance assumption is considered for the storage model. We show that this makes possible a closed-form relationship between storage and transmission line capacity, which provides a capability for quick \emph{back of the envelope} calculations on transmission and storage costs that provide valuable insight. Other more detailed constraints, including storage charging/discharging efficiency and rate, can be included in a more general optimization problem and solved computationally, as we show. 

\subsection{System Description}
Consider a power grid with $\mathcal{B}:=\{1,...,B\}$ buses and $\mathcal{L}:=\{1,...,L\}$ transmission lines. The power generated by bus $i$ in slot $t$ is $p_i[t]$, which could be positive or negative depending on whether the bus is producing energy (e.g.,  a generator or renewable energy source) or consuming energy (like a load). To distinguish these two types of buses, the power produced by $i$ in slot $t$ is defined as $p^+_i[t] := \max\{0,p_i[t]\}$, while the power consumed by $i$ in slot $t$ is $p^-_i[t] := \max\{0,-p_i[t]\}$. Note that $p_i[t] = p^+_i[t] - p^-_i[t]$. \\
\indent To include the generation curtailment and load-shedding scenarios, let $\underline{p}^+_i[t] \in [0, p^+_i[t]]$ be the power produced by bus $i$ to the system after curtailment in slot $t$. Similarly, let $\underline{p}^-_i[t] \in [0, p^-_i[t]]$ be the power consumed by bus $i$ from the system after load shedding in slot $t$.

\subsection{Problem Formulation}
To reduce generation curtailment and load shedding in the future and the accompanying value loss, the system planner needs to analyze multiple scenarios and solutions. In this paper, we are interested in the special relationship between storage and transmission lines in benefiting the system.\\
\indent To get started, let us begin by assuming that generator and load profiles are known before installing new storage and transmission lines. The conventional optimization problem for finding the minimum cost of installing storage and transmission lines is:
\begin{equation}\label{meshedori}
\begin{aligned}
\min &\quad  \sum_{i=1}^B\sum_{j\neq i} \alpha_{ij} c_{ij} + \sum_{i=1}^B \beta_i S_i  \\
+\sum_{i=1}^B \sum_{s=1}^N h & \{ \gamma^+_i(p^+_i[s]-\underline{p}^+_i[s])+\gamma^-_i(p^-_i[s]-\underline{p}^-_i[s])\}, \\
\text { s.t. }   &\forall i\neq j,\quad i,j \in \mathcal{B}, \quad t \in \mathcal{T}:\\
&0 \leq x_{i}[t] \leq S_{i}, \quad 0 \leq x_{i}[0] \leq S_{i},\\
&x_{i}[0]=x_{i}[N],\\
& | f_{ij}[t] | \leq  C_{ij}+c_{ij}, \\
& 0 \leq \underline{p}^+_i[t] \leq p^+_i[t], \quad 0 \leq \underline{p}^-_i[t] \leq p^-_i[t], \\
& C_{ij} \geq 0,\quad c_{ij} \geq 0, 
\end{aligned}   
\end{equation}
where $f_{ij}[t]$ represents the line flow from bus $i$ to bus $j$ in time slot $t$, satisfying $f_{ij}[t] = -f_{ji}[t]$. Above,
\begin{equation} \nonumber
\sum_{i=1}^B \sum_{t=1}^N h \{ \gamma^+_i(p^+_i[t]-\underline{p}^+_i[t])+\gamma^-_i(p^-_i[t]-\underline{p}^-_i[t])\}     
\end{equation}
is the total loss of value during $\mathcal{T}$ due to curtailment and load shedding, while $\sum_{i=1}^B\sum_{j\neq i} \alpha_{ij} c_{ij}$ is the discounted cost of the transmission line, and $\sum_{i=1}^B \beta_i S_i$ is the discounted cost of storage. For the DC power flow model, the optimization problem is linear programming (LP), because $f_{ij}[t]$ can be directly calculated by the power transfer distribution factor (PTDF) matrix $\textbf{H}$, which in matrix form is:
\begin{equation} \label{ptdf}
\boldsymbol{f}[t] =   \textbf{H}\{  (\boldsymbol{p}^+[t]-\underline{\boldsymbol{p}}^+[t]) - (\boldsymbol{p}^-[t]-\underline{\boldsymbol{p}}^-[t]) + (\boldsymbol{x}[t-1] - \boldsymbol{x}[t])\}.
\end{equation}
\indent The decision variables of the problem (\ref{meshedori}) are $\{c_{ij},x_i[0],x_i[t], S_i[t],f_{ij}[t], \underline{p}^+_i[t],\underline{p}^-_i[t]\}$, but after eliminating the equality constraints the decision variables can be reduced to $\{c_{ij},x_i[t], S_i,\underline{p}^+_i[t],\underline{p}^-_i[t]\}$, totaling $L + B + 3BN$ in number.\\
\indent In the planning process, the system operator may choose to look at the historical power profile of each bus as a reference for future planning. Load and generation expansion parameters may also multiply the historical data for a more precise model, but the power profile $p_i[t]$ can also be estimated based on unit commitment (UC) and economic dispatch (ED) before planning new storage and transmission lines.

\section{Main Results} \label{sec:res}
Installing more energy storage or transmission lines can both reduce the value loss of generation curtailment and load shedding. We examine their interrelationships and their fundamental limitations.
\subsection{Cumulative Energy Perspective}
A ``cumulative energy" perspective is introduced to clarify the relationship between storage and transmission lines. 
\begin{definition}[Local Cumulative Energy]\label{d1}
The local cumulative energy $E_i[t]$ is defined as the total energy produced up to time slot $t$ by all generators and loads connected to bus $i$:
\begin{equation} \label{d1e}
E_i[t]:= h\sum_{s=1}^{t} p_{i}[s].
\end{equation}
\end{definition}
\begin{definition}[Local Cumulative Net Energy]\label{d2}
A key role is played by the local cumulative \emph{net} energy $\underline{E}_i[t]$ defined as the total energy produced up to time slot $t$ by all generators and loads connected to bus $i$ \emph{after} curtailment or load shedding:
\begin{equation} \label{d1e}
\underline{E}_i[t]:= h\sum_{s=1}^{t} (\underline{p}^+_{i}[s]-\underline{p}^-_{i}[s]).
\end{equation}
The values of $E_i[t]$ and $\underline{E}_i[t]$ could be positive or negative depending on the bus components.
\end{definition}
\begin{definition}[Transferred Cumulative Energy]\label{d3}
The transferred cumulative energy $F_{ij}[t]$ is defined as the total energy transferred from bus $i$ to bus $j$ until time slot $t$:
\begin{equation}\label{tce}
F_{ij}[t] := h\sum_{s=1}^{t}  f_{ij}[s],
\end{equation}
where $f_{ij}[t]$ is the power transferred from bus $i$ to bus $j$ in time slot $t$ in the presence of storage.
\end{definition}
\indent To quantify the transmission value of energy storage through power flow shaping, the original transferred cumulative energy, in the absence of any additional storage, is introduced for comparison.
\begin{definition}[Original Transferred Cumulative Energy]\label{d4}
The original transferred cumulative energy $F'_{ij}[t]$ is defined as the total energy transferred from bus $i$ to bus $j$ by time slot $t$ \emph{before} considering any new installed storage:
\begin{equation}\label{otce}
F'_{ij}[t] := h\sum_{s=1}^{t}  f'_{ij}[s],
\end{equation}
where $f'_{ij}[t]$ is the power transferred from bus $i$ to bus $j$ in time slot $t$ \emph{before} considering (additional) storage, which can be expressed in matrix form under DC power flow scheme as:
\begin{equation}
\boldsymbol{f'}[t] =   \textbf{H}\{  (\boldsymbol{p}^+[t]-\underline{\boldsymbol{p}}^+[t]) - (\boldsymbol{p}^-[t]-\underline{\boldsymbol{p}}^-[t])\}.
\end{equation}
\end{definition}
Based on the definitions above, the SoC balance assumption (Assumption \ref{as}) can be restated in two other ways.
\begin{corollary}[Equivalent SoC Balance]
The SoC balance assumption in the cumulative energy form is equivalent to either of the equations below:
\begin{equation} \label{eqsoc1}
F_{ij}[N] = F'_{ij}[N] 
\end{equation}
or
\begin{equation} \label{eqsoc2}
     \sum_{j\neq i}F_{ij}[N]=\underline{E}_i[N].
\end{equation}
\end{corollary}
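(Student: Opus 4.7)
The plan is to derive both equivalences from Kirchhoff's current law (KCL) at each bus combined with telescoping over the cycle $\mathcal{T}$; only the line-by-line statement (\ref{eqsoc1}) needs the DC PTDF formula (\ref{ptdf}) explicitly.

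I would begin with (\ref{eqsoc2}). Writing KCL at bus $i$ in slot $t$ in the presence of storage yields $\sum_{j\neq i} f_{ij}[t] = (\underline{p}_i^+[t] - \underline{p}_i^-[t]) - (x_i[t] - x_i[t-1])/h$. Multiplying by $h$, summing $t=1,\ldots,N$, and telescoping the storage term produces the identity $\sum_{j\neq i} F_{ij}[N] = \underline{E}_i[N] - (x_i[N]-x_i[0])$. This identity immediately gives the desired equivalence: $x_i[0]=x_i[N]$ at every bus if and only if $\sum_{j\neq i} F_{ij}[N] = \underline{E}_i[N]$ at every bus.

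For (\ref{eqsoc1}), I would first note that the same KCL/telescoping argument applied to the storage-free flows $f'_{ij}[t]$ gives $\sum_{j\neq i} F'_{ij}[N] = \underline{E}_i[N]$ unconditionally, since there is no storage term to leave a residual. The forward direction (SoC balance implies $F_{ij}[N]=F'_{ij}[N]$) then follows by subtracting the DC power flow expressions for $\boldsymbol{f}[t]$ and $\boldsymbol{f'}[t]$ appearing in (\ref{ptdf}): the curtailment and shedding terms cancel, leaving $\boldsymbol{f}[t]-\boldsymbol{f'}[t] = \textbf{H}(\boldsymbol{x}[t-1]-\boldsymbol{x}[t])$ (up to the $h$ factor needed to match units), so that after telescoping $\boldsymbol{F}[N]-\boldsymbol{F'}[N]$ is proportional to $\textbf{H}(\boldsymbol{x}[0]-\boldsymbol{x}[N])$, which vanishes under SoC balance. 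For the converse, if $F_{ij}[N]=F'_{ij}[N]$ line-wise, then summing over lines incident to bus $i$ yields the bus-wise equality $\sum_{j\neq i} F_{ij}[N] = \sum_{j\neq i} F'_{ij}[N] = \underline{E}_i[N]$, which by the first part forces $x_i[0]=x_i[N]$.

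The only real subtlety to watch out for is that the line-wise statement $F=F'$ is \emph{a priori} stronger than its bus-wise aggregation, because the PTDF matrix $\textbf{H}$ has the all-ones vector in its null space, so one cannot directly read off SoC balance by inverting $\textbf{H}(\boldsymbol{x}[0]-\boldsymbol{x}[N])=0$. My plan sidesteps this by routing the converse through the already-established equivalence with (\ref{eqsoc2}). Beyond that, the argument is pure bookkeeping, and the main thing to be careful about is tracking the $h$ factor and the sign conventions in (\ref{ptdf}).
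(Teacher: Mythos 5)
Your treatment of (\ref{eqsoc2}) and of the forward direction of (\ref{eqsoc1}) is exactly the paper's proof: the paper sets $t=N$ in the nodal balance $x_i[N]=x_i[0]+\underline{E}_i[N]-\sum_{j\neq i}F_{ij}[N]$ and in the telescoped PTDF difference $\boldsymbol{F}[N]-\boldsymbol{F}'[N]=\textbf{H}(\boldsymbol{x}[0]-\boldsymbol{x}[N])$, which are precisely your KCL-plus-telescoping identities (the paper, like you, is casual about the factor $h$). Where you go beyond the paper is the converse of (\ref{eqsoc1}): the paper's proof establishes only that SoC balance implies the two displayed equations and never explains how to return from $F_{ij}[N]=F'_{ij}[N]$ to $x_i[0]=x_i[N]$, even though the corollary asserts equivalence. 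Your observation that $\textbf{H}$ is not invertible, and your decision to route the converse through (\ref{eqsoc2}), is a genuine and worthwhile addition.

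One caveat on that added step: the claim that $\sum_{j\neq i}F'_{ij}[N]=\underline{E}_i[N]$ holds ``unconditionally'' is not quite free of hypotheses. The storage-free flows are defined through the PTDF map, and in any slot where the injections $\underline{p}^+_i[t]-\underline{p}^-_i[t]$ do not sum to zero over the buses --- which is exactly the situation that makes storage necessary --- the per-slot KCL you invoke fails at the reference (slack) bus, where the imbalance is implicitly absorbed. After summing over the cycle, the identity $\sum_{j\neq i}F'_{ij}[N]=\underline{E}_i[N]$ does hold at every non-slack bus, but at the slack bus it requires $\sum_{i\in\mathcal{B}}\underline{E}_i[N]=0$, i.e., cycle-level energy balance after curtailment and load shedding (for instance Assumption \ref{eb} together with no curtailment or shedding). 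Without that, a two-bus example with storage at the slack bus satisfies $F_{12}[N]=F'_{12}[N]$ (the slack column of $\textbf{H}$ is zero, so the slack storage never shows up in the line flow) while the slack-bus storage ends the cycle at a different SoC, so the converse of (\ref{eqsoc1}) genuinely needs the balance hypothesis. With that proviso stated, your argument is correct and strictly more complete than the paper's own proof.
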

\begin{proof}
After selecting $t=N$ in (\ref{tce}), (\ref{otce}), and (\ref{powerbalance}), we have:
\begin{equation} \label{powerbalanceN}
\begin{aligned}
\boldsymbol{F}[N] -  \boldsymbol{F}'[N]  =  \textbf{H}(\boldsymbol{x}[0] - \boldsymbol{x}[N]), \\  
x_i[N] = x_i[0] + \underline{E}_i[N] - \sum_{j\neq i}F_{ij}[N].    
\end{aligned}
\end{equation}
Then, the SoC balance assumption $x_i[0] = x_i[N]$ also means:
\begin{equation} 
\begin{aligned}
\boldsymbol{F}[N] -  \boldsymbol{F}'[N]= \boldsymbol{0}, \\
\underline{E}_i[N] - \sum_{j\neq i}F_{ij}[N] = 0. 
\end{aligned}
\end{equation}
\end{proof}
Similarly, the minimum storage capacity can be determined by directly analyzing the cumulative energies (\ref{d1e}) and (\ref{tce}).
\begin{theorem}[Minimum Storage Capacity Needed] \label{theoremcap}
The minimum storage capacity $S_i$ to ensure that the power balance at bus $i$ is met is:
\begin{equation} \label{mincap}
\max_{t \in \mathcal{T}}(\underline{E}_i[t] - \sum_{j\neq i}F_{ij}[t]) - \min_{t \in \mathcal{T}}(\underline{E}_i[t] - \sum_{j\neq i}F_{ij}[t]).   
\end{equation}
\end{theorem}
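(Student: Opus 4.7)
The plan is to use the discrete-time power balance at bus $i$ to express the SoC trajectory as an affine shift of a known signal, and then observe that the minimum admissible capacity is exactly the peak-to-trough range of that signal.

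First I would write out the SoC recursion implicit in the power balance equation used in the preceding corollary, namely
\begin{equation}\nonumber
x_i[t] = x_i[t-1] + h\bigl(\underline{p}^+_i[t]-\underline{p}^-_i[t]\bigr) - h\sum_{j\neq i} f_{ij}[t],
\end{equation}
and telescope it from $0$ to $t$ to get
\begin{equation}\nonumber
x_i[t] = x_i[0] + \underline{E}_i[t] - \sum_{j\neq i} F_{ij}[t].
\end{equation}
Define $g_i[t] := \underline{E}_i[t] - \sum_{j\neq i} F_{ij}[t]$, with $g_i[0]=0$, so the feasibility constraint $0 \le x_i[t] \le S_i$ for all $t \in \{0,1,\dots,N\}$ becomes $-g_i[t] \le x_i[0] \le S_i - g_i[t]$ for every $t$.

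Next I would aggregate these over $t$: a valid initial SoC $x_i[0]$ exists if and only if
\begin{equation}\nonumber
-\min_{t\in\mathcal{T}} g_i[t] \;\le\; x_i[0] \;\le\; S_i - \max_{t\in\mathcal{T}} g_i[t],
\end{equation}
which is consistent exactly when $S_i \ge \max_t g_i[t] - \min_t g_i[t]$. This gives the claimed lower bound on $S_i$, and picking any $x_i[0]$ in the interval above (e.g.\ $x_i[0] = -\min_t g_i[t]$) shows that the bound is achievable, so the infimum is attained and equals the expression in (\ref{mincap}).

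Finally, I would verify that the SoC balance assumption is not lost in this minimization. By the equivalent SoC balance corollary, $\sum_{j\neq i} F_{ij}[N] = \underline{E}_i[N]$, so $g_i[N]=0=g_i[0]$; hence the endpoints of the trajectory already coincide, and the assumption is automatically absorbed into the constraints we already imposed. The only step requiring any care is making the argument tight on both sides, i.e.\ exhibiting an explicit $x_i[0]$ that simultaneously satisfies the upper and lower envelopes when $S_i$ equals the max--min gap; this is the main (though minor) obstacle, and it is dispatched by the choice $x_i[0] = -\min_{t} g_i[t]$, after which both inequalities hold with equality at the extremal time slots.
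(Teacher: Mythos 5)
Your proof is correct and follows essentially the same route as the paper: telescoping the bus power balance to $x_i[t] = x_i[0] + \underline{E}_i[t] - \sum_{j\neq i}F_{ij}[t]$, imposing $0 \le x_i[t] \le S_i$, and choosing $x_i[0] = -\min_{t\in\mathcal{T}}\bigl(\underline{E}_i[t] - \sum_{j\neq i}F_{ij}[t]\bigr)$ to attain the max--min range. If anything, your write-up is slightly more complete than the paper's, since you state the feasibility condition as an explicit interval-intersection (both necessity and achievability) and note that $g_i[N]=0$ keeps the chosen $x_i[0]$ nonnegative, whereas the paper simply appeals to the LP selecting the minimal initial SoC.
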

\begin{proof}
To ensure the real-time power balance at any bus $i$, the storage SoC should satisfy
\begin{equation} \label{powerbalance}
x_i[t] = x_i[0] + \underline{E}_i[t] - \sum_{j\neq i}F_{ij}[t] \quad \forall t \in \mathcal{T}.
\end{equation}
\indent Meanwhile, we need $x_i[t] \leq S_i$ for all $t \in \mathcal{T}$, which is equivalent to
\begin{equation} \label{si}
S_i \geq x_i[0] + \underline{E}_i[t] - \sum_{j\neq i}F_{ij}[t] \quad \forall t \in \mathcal{T}.
\end{equation}
\indent Because the SoC should be non-negative, i.e. $x_i[t] \geq 0$, we have
\begin{equation} \label{soc0}
x_i[0] + \underline{E}_i[t] - \sum_{j\neq i}F_{ij}[t]\geq 0 \quad \forall t \in \mathcal{T}.
\end{equation}
\indent In fact, to reduce the storage cost in (\ref{meshedori}), the LP will choose the minimum feasible value of initial SoC:
\begin{equation} \label{soc0min}
x_i[0] = - \min_{t \in \mathcal{T}}(\underline{E}_i[t] - \sum_{j\neq i}F_{ij}[t]).
\end{equation}
By substituting the storage initial SoC (\ref{soc0min}) into (\ref{si}), we have the minimum storage capacity
\begin{equation} \label{soc0min}
S_i = \max_{t \in \mathcal{T}}(\underline{E}_i[t] - \sum_{j\neq i}F_{ij}[t]) - \min_{t \in \mathcal{T}}(\underline{E}_i[t] - \sum_{j\neq i}F_{ij}[t]).
\end{equation}
\end{proof}
Because $x_i[0] = - \min_{t \in \mathcal{T}}(\underline{E}_i[t] - \sum_{j\neq i}F_{ij}[t])$, we can also directly represent the storage SoC from the cumulative energy perspective.
\begin{corollary}[Storage SoC]
The Storage SoC $x_i[t]$ at any time slot $t$ can be expressed as
\begin{equation} \label{minsoc}
\underline{E}_i[t] - \sum_{j\neq i}F_{ij}[t] - \min_{t \in \mathcal{T}}(\underline{E}_i[t] - \sum_{j\neq i}F_{ij}[t]) \quad \forall t \in \mathcal{T}.
\end{equation}
\end{corollary}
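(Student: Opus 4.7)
The plan is to obtain the claimed expression by direct substitution, combining two ingredients that were already established in the proof of Theorem~\ref{theoremcap}: the real-time power-balance identity at bus $i$, and the optimal (minimum feasible) choice of the initial state of charge $x_i[0]$.

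First, I would recall the power-balance identity derived earlier, namely
\begin{equation*}
x_i[t] = x_i[0] + \underline{E}_i[t] - \sum_{j\neq i} F_{ij}[t], \quad \forall t \in \mathcal{T},
\end{equation*}
which simply expresses the SoC at the end of slot $t$ as the initial SoC plus the net energy accumulated at bus $i$ (local net production minus net export) through time $t$.

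Next, I would invoke the optimal value of the initial SoC,
\begin{equation*}
x_i[0] = -\min_{t \in \mathcal{T}}\Bigl(\underline{E}_i[t] - \sum_{j\neq i} F_{ij}[t]\Bigr),
\end{equation*}
which was justified in the proof of Theorem~\ref{theoremcap}: the cost-minimizing LP~(\ref{meshedori}) picks the smallest value of $x_i[0]$ consistent with the nonnegativity constraint $x_i[t] \geq 0$ for all $t$, and that value is precisely the negative of the minimum of the cumulative-energy process. Substituting this expression for $x_i[0]$ into the power-balance identity yields the claimed closed form for $x_i[t]$ in one line.

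There is essentially no hard step here; the only point deserving a brief remark is that we are legitimately allowed to take $x_i[0]$ to equal its minimum feasible value. That subtlety is already resolved inside the proof of Theorem~\ref{theoremcap}, since any larger choice of $x_i[0]$ would incur a strictly larger storage capacity $S_i$ and hence a strictly larger cost $\beta_i S_i$ in the objective of~(\ref{meshedori}), while leaving all other constraints slack. Consequently the corollary follows as an immediate by-product of the previous theorem and does not require any independent argument.
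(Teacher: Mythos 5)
Your proposal is correct and follows exactly the paper's route: the paper obtains the corollary by substituting the cost-minimizing initial SoC $x_i[0] = -\min_{t \in \mathcal{T}}\bigl(\underline{E}_i[t] - \sum_{j\neq i}F_{ij}[t]\bigr)$, already justified in the proof of Theorem~\ref{theoremcap}, into the power-balance identity (\ref{powerbalance}). Nothing further is needed.
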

Because the line flow $\boldsymbol{f}[t]$ includes the storage SoC $\boldsymbol{x}[t]$ in (\ref{ptdf}), the minimum storage capacity (\ref{mincap}) and the corresponding SoC (\ref{minsoc}) are implicit expressions, but they can be converted to explicit closed-form equations under some conditions.

\subsection{Problem Reformulation}
The cumulative energy analysis bridges the gap between storage and transmission lines, which also allows us to reformulate the original optimization problem (\ref{meshedori}) by switching decision variables from $\{ x_i[t], S_i\}$ to $f_{ij}[t]$. The overall problem after reformulation becomes:\\
\begin{equation}\label{meshedref}
\begin{aligned}
&\min \quad  \sum_{i=1}^B\sum_{j\neq i} \alpha_{ij} c_{ij} + \sum_{i=1}^B \beta_i S_i + \\
 &\sum_{i=1}^B \sum_{s=1}^N h \{ \gamma^+_i(p^+_i[s]-\underline{p}^+_i[s])+\gamma^-_i(p^-_i[s]-\underline{p}^-_i[s])\}, \\
\text { s.t. } &\underline{E}_i[t] = h\sum_{s=1}^{t} (\underline{p}^+_{i}[s]-\underline{p}^-_{i}[s]),\quad F_{ij}[t] = h\sum_{s=1}^{t}  f_{ij}[s],\\
&x_i[0] + \underline{E}_i[t] - \sum_{j\neq i}F_{ij}[t] = 0,\\
&S_i = x_i[0] - \min_{t \in \mathcal{T}}(\underline{E}_i[t] - \sum_{j\neq i}F_{ij}[t]),\\
&\underline{E}_i[N] = \sum_{j\neq i}F_{ij}[N],\\
&| f_{ij}[t] | \leq  C_{ij}+c_{ij}, \\
& 0 \leq \underline{p}^+_i[t] \leq p^+_i[t], \quad 0 \leq \underline{p}^-_i[t] \leq p^-_i[t], \\
& C_{ij} \geq 0,\quad c_{ij} \geq 0, \\
&\forall i\neq j,\quad i,j \in \mathcal{B}, \quad t \in \mathcal{T},
\end{aligned}   
\end{equation}
where the decision variables are converted to $\{c_{ij},x_i[0],S_i,f_{ij}[t], \underline{p}^+_i[t],\underline{p}^-_i[t], \underline{E}_i[t], F_{ij}[t]\}$. \\
\indent Similarly, after eliminating the equality constraints, the decision variables can be simplified to $\{c_{ij},f_{ij}[t],\underline{p}^+_i[t],\underline{p}^-_i[t]\}$. Then, the number of decision variables in the reformulated problem (\ref{meshedref}) is $L+LN+2BN$, which is larger than the original problem (\ref{meshedori}) if $LN-BN-B \geq 0$, and vice versa. Nevertheless, because both problems are LPs, the computation time is expected to be relatively short, provided that the problem size is within reasonable limits.

\subsection{Fundamental Limits}
\indent Two major uncertainties impact the results of optimization problems (\ref{meshedori}) and (\ref{meshedref}). The first stems from the variability in load and generation expansion, which is related to the forecasting inaccuracies of future bus profiles $p^+_i[t]$ and $p_i^-[t]$. Additionally, rapid advancements in materials science, particularly in the domains of battery storage \cite{viswanathan20222022,schmidt2017future} and transmission lines \cite{Chojkiewicz2023Accelerating}, introduce significant unpredictability in the future investment costs for storage ($\beta_i$) and transmission infrastructure ($\alpha_{ij}$).\\
\indent As $\beta_i$ and $\alpha_{ij}$ vary, the solution of (\ref{meshedori}) or (\ref{meshedref}) traces out the optimal trade-off curve of storage vs. transmission lines \cite{boyd2004convex}. Rather than directly comparing the optimal outcomes across various scenarios, we aim to identify the different values of storage and transmission lines and their fundamental limitations.\\
\indent First, SoC balance (Assumption \ref{as}) guarantees that storage neither produces nor consumes net energy over each cycle. If the total energy generated by the system exceeds the total energy consumption over the cycle $\mathcal{T}$, that is, if $\sum_{s\in\mathcal{T}}E_i[s] > 0$, then generation curtailment is unavoidable regardless of the storage or transmission line capacity. Conversely, load shedding is required if $\sum_{s\in\mathcal{T}}E_i[s] < 0$.\\
\begin{assumption}[Energy Balance]\label{eb}
To unveil the deeper relationship between energy storage and transmission lines, the total energy generated by the system is assumed equal to the total energy consumption over the cycle $\mathcal{T}$, i.e.
\begin{equation}
\sum_{i\in\mathcal{B}}E_i[N] = 0.    
\end{equation}
\end{assumption}
Under Assumption \ref{eb}, generation curtailment or load shedding is completely avoidable by installing more storage and transmission lines. Also, the total generation curtailment will be equal to the total load shedding. \\
\indent Typically, the value of the lost load (VOLL), i.e. $\gamma^-_i$, is extremely high. For example, the cost of a one-hour outage for residential load is currently administratively established as \$ 5,122 / MWh in the Electric Reliability Council of Texas (ERCOT) region \cite{voll}. Based on this, the solution of (\ref{meshedref}) can be approximated by assuming no load shedding during the cycle $\mathcal{T}$, i.e.
\begin{equation}
p^+_i[t]=\underline{p}^+_i[t],\quad p^-_i[t]=\underline{p}^-_i[t], \quad \forall t \in \mathcal{T},
\end{equation}
or
\begin{equation}
E_i[t] = \underline{E}_i[t], \quad \forall t \in \mathcal{T},
\end{equation}
which allows the original power flow or cumulative transferred energy in Definition \ref{d4} to be directly calculated \emph{before} solving the optimization problem. The corresponding simplified linear program is:
\begin{equation}\label{meshedsim}
\begin{aligned}
\min \quad  &\sum_{i=1}^B\sum_{j\neq i} \alpha_{ij} c_{ij} + \sum_{i=1}^B \beta_i S_i \\
\text { s.t. } & \quad p^+_i[t]=\underline{p}^+_i[t],\quad p^-_i[t]=\underline{p}^-_i[t], \\
& \quad \text{and} \quad \text{all constraints in }(\ref{meshedref}). 
\end{aligned}   
\end{equation}
\subsubsection{Fundamental Limit of Transmission Lines} \quad \\
\indent Suppose future technology makes the cost of constructing or improving the capacity of transmission lines extremely cheap, or the system already has enough transmission capability. However, the system might \emph{still} need storage to avoid load shedding in (\ref{meshedsim}), due to a \emph{fundamental limit of transmission lines}.\\
\indent Since the transmission lines only serve as a \emph{spatial} energy transfer mechanism, storage is necessary whenever there is a \emph{temporal} imbalance in the total system power, irrespective of the transmission line capacity available. If the system has infinite transmission capability, the storage can be placed anywhere without limitations. Under this situation, the total minimum storage capacity needed is given explicitly in Theorem \ref{totalmin}.
\begin{theorem} [Total Minimum Storage Capacity Needed] \label{totalmin}
Under Assumption \ref{eb} and with sufficient transmission line capacity, if the system wants to avoid load shedding, the minimum total storage capacity needed is
\begin{equation} \label{mincaptotal}
\max_{t \in \mathcal{T}}(\sum_{i=1}^B E_i[t]) - \min_{t \in \mathcal{T}}(\sum_{i=1}^B E_i[t]).   
\end{equation}    
\end{theorem}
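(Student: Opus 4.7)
My plan combines the per-bus lower bound from Theorem~\ref{theoremcap} with the antisymmetry $F_{ij} = -F_{ji}$, then exhibits a construction that saturates the resulting bound when transmission capacity is unconstrained. Setting $g_i(t) := \underline{E}_i[t] - \sum_{j \neq i} F_{ij}[t]$, Theorem~\ref{theoremcap} gives $S_i \geq \max_t g_i(t) - \min_t g_i(t)$ at every bus. Summing over $i$ and using that a sum of ranges dominates the range of a sum (since $\max_t \sum_i g_i(t) \leq \sum_i \max_t g_i(t)$, and dually for the minimum),
\begin{equation}
\sum_{i=1}^B S_i \;\geq\; \max_{t \in \mathcal{T}} \sum_{i=1}^B g_i(t) - \min_{t \in \mathcal{T}} \sum_{i=1}^B g_i(t).
\end{equation}
The double sum $\sum_i \sum_{j\neq i} F_{ij}[t]$ vanishes by antisymmetry, so $\sum_i g_i(t) = \sum_i \underline{E}_i[t]$.

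Next I would replace $\underline{E}_i[t]$ with $E_i[t]$. Under Assumption~\ref{eb} and the no-load-shedding premise $p^-_i = \underline{p}^-_i$, the equivalent SoC balance~(\ref{eqsoc2}) summed over $i$ yields $\sum_i \underline{E}_i[N] = 0$; together with $\sum_i E_i[N] = 0$ this forces the total curtailment $\sum_i \sum_s h(p^+_i[s] - \underline{p}^+_i[s])$ to vanish, and being a sum of nonnegative terms it vanishes pointwise. Hence $\underline{E}_i[t] = E_i[t]$ for all $i,t$, establishing the lower bound~(\ref{mincaptotal}).

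For tightness I would place all the storage at one bus, say bus $1$, with capacity $S_1 = \max_t \sum_i E_i[t] - \min_t \sum_i E_i[t]$ and initial SoC $-\min_t \sum_i E_i[t]$, and let it absorb or release the whole instantaneous system imbalance $\sum_i p_i[t]$ each slot. A quick check gives $x_1[t] = x_1[0] + \sum_i E_i[t] \in [0, S_1]$ with $x_1[N]=x_1[0]$ by Assumption~\ref{eb}, while $g_1(t) = \sum_i E_i[t]$ and $g_i(t) \equiv 0$ for $i \neq 1$, so Theorem~\ref{theoremcap} returns exactly~(\ref{mincaptotal}). The main obstacle I anticipate is justifying this construction under the DC power flow coupling~(\ref{ptdf}), since line flows are dictated by the PTDF matrix $\textbf{H}$ rather than chosen freely; the clean resolution is that ``sufficient transmission capacity'' means $|f_{ij}[t]| \leq C_{ij}+c_{ij}$ never binds, so whatever PTDF-induced flow arises from concentrating all temporal buffering at bus $1$ is automatically feasible regardless of network topology.
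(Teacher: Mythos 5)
Your lower-bound argument is correct and is essentially the paper's: the paper sums the per-bus inequalities (\ref{si}) and (\ref{soc0}) over all buses and cancels the flow terms by $F_{ij}[t]=-F_{ji}[t]$, which is the same aggregation you perform, just routed through the range bound of Theorem~\ref{theoremcap} plus subadditivity of $\max$ rather than through the initial-SoC terms $x_i[0]$ explicitly; the two derivations are interchangeable. Where you genuinely add something is in the two steps the paper glosses over. First, the paper simply invokes $E_i[t]=\underline{E}_i[t]$, relying on the simplification made before (\ref{meshedsim}), whereas you derive it: no load shedding plus the summed SoC balance (\ref{eqsoc2}) gives $\sum_i \underline{E}_i[N]=0$, and Assumption~\ref{eb} then forces total (hence pointwise) curtailment to zero. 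Second, the paper asserts tightness only with the phrase ``with equality being achieved when the system possesses sufficient transmission line capacity,'' while you exhibit an explicit achieving policy: concentrate all storage at one bus with $x_1[0]=-\min_t\sum_i E_i[t]$, absorb the instantaneous system imbalance, and check $x_1[t]=x_1[0]+\sum_i E_i[t]\in[0,S_1]$ with $x_1[N]=x_1[0]$; since the line limits never bind, the PTDF-induced flows are feasible, so the bound (\ref{mincaptotal}) is attained. Your construction and your resolution of the PTDF coupling are sound, and they make the achievability half rigorous rather than asserted, at the cost of a slightly longer argument than the paper's.
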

\begin{proof}
After summing up (\ref{si}) over all buses, the total storage capacity satisfies
\begin{equation}
\sum_{i=1}^B S_i \geq \sum_{i=1}^Bx_i[0] + \sum_{i=1}^B\underline{E}_i[t] - \sum_{i=1}^B\sum_{j\neq i}F_{ij}[t] \quad \forall t \in \mathcal{T}.    
\end{equation}
Noting that $E_i[t] = \underline{E}_i[t]$ and $F_{ij}[t] = - F_{ji}[t]$, we have:
\begin{equation} \label{sumscap}
\sum_{i=1}^B S_i \geq \sum_{i=1}^Bx_i[0] + \sum_{i=1}^BE_i[t] \quad \forall t \in \mathcal{T}.    
\end{equation}
\indent Similarly, after summing up (\ref{soc0}) over all buses, the total storage initial energy satisfies
\begin{equation} \label{sums0}
\sum_{i=1}^B x_i[0]  \geq -\sum_{i=1}^BE_i[t] \quad \forall t \in \mathcal{T}.    
\end{equation}
After substituting (\ref{sums0}) into (\ref{sumscap}), we have:
\begin{equation} 
\sum_{i=1}^B S_i \geq \max_{t \in \mathcal{T}}(\sum_{i=1}^B E_i[t]) - \min_{t \in \mathcal{T}}(\sum_{i=1}^B E_i[t]),   
\end{equation}
with equality being achieved when the system possesses sufficient transmission line capacity.
\end{proof}
\subsubsection{Fundamental Limit of Energy Storage} \quad \\
\indent On the other hand, the fundamental limit of energy storage, meaning the minimum required transmission line capacity to avoid load shedding, is less obvious. The following bound holds regardless of the storage capacity even if the latter is infinite.
\begin{theorem}[Minimum Line Capacity Needed] \label{linecaplim}
Under Assumption \ref{eb} and sufficient storage capacity, if the system is to avoid load shedding, then the minimum capacity of transmission line $(i,j)$ is the mean value of the original power flow over the cycle $\mathcal{T}$, i.e.,
\begin{equation} \label{meancap}
\frac{1}{N} \sum^{N}_{s=1}|f'_{ij}[s]|.
\end{equation}
\end{theorem}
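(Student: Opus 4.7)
The plan is to leverage the equivalent SoC-balance identity $F_{ij}[N]=F'_{ij}[N]$ from the corollary preceding Theorem~\ref{theoremcap}, which expresses the fact that adding storage cannot change the cumulative energy transferred along a line over one period; storage only redistributes the flow in time. Since the line capacity must dominate the peak of $|f_{ij}[\cdot]|$, and the peak dominates the mean, one obtains a lower bound in terms of the mean of the \emph{original} flow $f'_{ij}[s]$; exhibiting a schedule that attains this mean then certifies it as the minimum.

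First, under Assumption~\ref{eb} together with the no-load-shedding simplification leading to (\ref{meshedsim}), I would apply the equivalent SoC balance (Eq.~(\ref{eqsoc1})) to conclude
\begin{equation*}
\sum_{s=1}^{N} f_{ij}[s] \;=\; \sum_{s=1}^{N} f'_{ij}[s].
\end{equation*}
Second, the per-slot capacity constraint $|f_{ij}[s]|\le C_{ij}+c_{ij}$ gives, via the elementary max-$\ge$-mean bound,
\begin{equation*}
C_{ij}+c_{ij} \;\ge\; \max_{s\in\mathcal{T}}|f_{ij}[s]| \;\ge\; \frac{1}{N}\sum_{s=1}^{N}|f_{ij}[s]|.
\end{equation*}
Combining this with the triangle inequality $\sum_s |f_{ij}[s]|\ge \bigl|\sum_s f_{ij}[s]\bigr|$ and the invariance from the first step produces the claimed lower bound, phrased in terms of $f'_{ij}$ rather than $f_{ij}$.

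For the tightness half I would construct an explicit attaining schedule: set the post-storage net injection at each bus to the time-constant value $E_i[N]/T$, which sums to zero across buses by Assumption~\ref{eb}. Linearity of the PTDF map then forces each line to carry a constant DC flow equal to the mean of its original flow $f'_{ij}[\cdot]$, saturating the inequalities above on line $(i,j)$. The storage trajectory $x_i[t]=x_i[0]+h\sum_{s\le t}\bigl(p_i[s]-E_i[N]/T\bigr)$ realizing these injections is feasible once $x_i[0]$ and $S_i$ are taken large enough, which is allowed by the ``sufficient storage'' hypothesis.

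The main obstacle is the tightness step: one must argue that a single constant-injection choice is simultaneously consistent with the DC power flow equation on \emph{every} line and does not push any line beyond its planned capacity, which follows from linearity of the PTDF map and from applying the same argument separately to each $(i,j)$. One also has to verify that the induced storage trajectory can be kept non-negative via the free choice of $x_i[0]$, which is precisely the content of the minimum-initial-SoC construction used in the proof of Theorem~\ref{theoremcap}. Together the lower bound and the explicit equality-achieving construction identify the closed-form minimum capacity stated in (\ref{meancap}).
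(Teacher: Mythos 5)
Your lower-bound half is essentially the paper's own argument: the paper likewise combines the per-slot capacity constraint $|f_{ij}[s]|\le C_{ij}+c_{ij}$ with the equivalent SoC-balance identity $F_{ij}[N]=F'_{ij}[N]$ of (\ref{eqsoc1}) to bound $C_{ij}+c_{ij}$ from below, so that part of your proposal matches the paper's proof step for step. Where you go further is the tightness half: the paper's proof establishes only the necessity direction and never exhibits a capacity-attaining schedule, whereas your constant-injection construction (post-storage injection $E_i[N]/T$ at every bus, which sums to zero by Assumption~\ref{eb}, hence by linearity of the PTDF map a constant flow equal to the time-average of $f'_{ij}[\cdot]$ on every line, with feasibility of the induced SoC trajectory secured exactly as in Theorem~\ref{theoremcap}) is a genuine and worthwhile addition.

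There is, however, one step that does not deliver what you claim. Your chain gives $C_{ij}+c_{ij}\ \ge\ \frac{1}{N}\sum_{s}|f_{ij}[s]|\ \ge\ \frac{1}{N}\bigl|\sum_{s}f_{ij}[s]\bigr| \;=\; \frac{1}{N}\bigl|\sum_{s}f'_{ij}[s]\bigr|$, i.e.\ the \emph{absolute value of the mean} of the original flow, not the stated quantity $\frac{1}{N}\sum_{s}|f'_{ij}[s]|$, the \emph{mean of the absolute value}; the triangle inequality runs in the wrong direction to convert the former into the latter. The two coincide precisely when $f'_{ij}[s]$ keeps a fixed sign over the cycle $\mathcal{T}$ (no flow reversal on the line) --- the situation in the paper's examples, and the assumption the paper itself makes tacitly when it passes from $F_{ij}[N]=F'_{ij}[N]$ to $h\sum_{s}|f'_{ij}[s]|\le hN(C_{ij}+c_{ij})$. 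Note also that your constant-injection schedule attains $\frac{1}{N}\bigl|\sum_{s}f'_{ij}[s]\bigr|$ on each line, so taken at face value your two halves actually prove that this smaller quantity is the minimum; they certify (\ref{meancap}) only under the fixed-sign condition. If you state that condition explicitly (or restrict attention to lines whose original flow does not reverse within the cycle), both halves of your argument are correct, and together they are more complete than the paper's proof.
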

\begin{proof}
This property is easily proved from the cumulative energy perspective. First, based on Definition \ref{d3}, we have
\begin{equation}
 |F_{ij}[N]| =  h\sum_{s=1}^{N}  |f_{ij}[N]| \leq  h\sum_{s=1}^{N} (C_{ij}+c_{ij}) =  hN(C_{ij}+c_{ij}). 
\end{equation}
On the other hand, the equivalent SoC assumption (\ref{eqsoc1}) can be expressed as
\begin{equation}
 F_{ij}[N] = F'_{ij}[N] =  h\sum_{s=1}^{N}  f'_{ij}[N].
\end{equation}
Then we have
\begin{equation}
h\sum_{s=1}^{N}  |f'_{ij}[N]| \leq  hN(C_{ij}+c_{ij}), 
\end{equation}
which also means
\begin{equation}
C_{ij}+c_{ij} \geq  \frac{1}{N} \sum^{N}_{s=1}|f'_{ij}[s]|.
\end{equation}
\end{proof}
Theorem \ref{linecaplim} not only points out the fundamental limits of energy storage but also allows the decision maker to briefly assess the minimum transmission capacity needed for the future \emph{without} solving any optimization problem. Furthermore, under the minimum line capacity, the corresponding storage capacity and SoC have \emph{explicit} closed-form and expressions.
\begin{corollary}[Closed-form Expression of Storage] \label{closedform}
When the transmission line capacity is equal to the minimum line capacity needed, i.e. $C_{ij}+c_{ij} =  \frac{1}{N} \sum^{N}_{s=1}|f'_{ij}[s]|$, the minimum storage capacity needed is given by the following closed-form expression
\begin{equation} \label{closedcap}
S_i = \max_{t \in \mathcal{T}}(E_i[t] - t\sum_{j\neq i}f'_{ij}[t]) - \min_{t \in \mathcal{T}}(E_i[t] - t\sum_{j\neq i}f'_{ij}[t]).  
\end{equation}
Similarly, the corresponding SoC of transmission-asset storage is
\begin{equation} \label{closedsoc}
E_i[t] - t\sum_{j\neq i}f'_{ij}[t] - \min_{t \in \mathcal{T}}(E_i[t] - t\sum_{j\neq i}f'_{ij}[t]), \quad \forall t \in \mathcal{T}.
\end{equation}
\end{corollary}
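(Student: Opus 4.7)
The plan is to treat this corollary as a direct specialization of Theorem \ref{theoremcap}: I will pin down the cumulative transferred energy $F_{ij}[t]$ in closed form at the tight line constraint of Theorem \ref{linecaplim}, and then read off $S_i$ and $x_i[t]$ from the general expressions (\ref{mincap}) and (\ref{minsoc}).

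First, I would identify the line-flow trajectory compatible with the minimum line capacity. Combining the equivalent SoC balance $F_{ij}[N]=F'_{ij}[N]$ from (\ref{eqsoc1}) with the line constraint $|f_{ij}[s]|\le C_{ij}+c_{ij}$ evaluated at its tight value produces the chain
\begin{equation} \nonumber
|F'_{ij}[N]| = \Bigl| h\sum_{s=1}^N f_{ij}[s] \Bigr| \le h\sum_{s=1}^N |f_{ij}[s]| \le hN(C_{ij}+c_{ij}) = h\sum_{s=1}^N |f'_{ij}[s]|.
\end{equation}
A natural trajectory satisfying every inequality in this chain is the time-constant flow $f_{ij}[s] \equiv \bar{f}'_{ij} := \frac{1}{N}\sum_{k=1}^N f'_{ij}[k]$: the triangle inequality gives $|\bar{f}'_{ij}| \le \frac{1}{N}\sum_s|f'_{ij}[s]| = C_{ij}+c_{ij}$, and by construction $h\sum_s \bar{f}'_{ij} = h\sum_s f'_{ij}[s] = F'_{ij}[N]$, so SoC balance holds. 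I would then argue that this constant allocation is (weakly) optimal for minimizing $S_i$: when $\mathrm{sign}(f'_{ij}[s])$ is constant in $s$ it is in fact \emph{forced}, and in the mixed-sign case any non-constant feasible deviation only enlarges the swing of $\sum_{j\neq i}F_{ij}[t]$ without giving new freedom to follow $E_i[t]$.

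With $f_{ij}[s] \equiv \bar{f}'_{ij}$ and therefore $F_{ij}[t] = h t\bar{f}'_{ij}$, together with $\underline{E}_i[t]=E_i[t]$ (the no-load-shedding simplification (\ref{meshedsim}) enabled by Assumption \ref{eb}), substitution into Theorem \ref{theoremcap} yields precisely (\ref{closedcap}), with the $t\sum_{j\neq i}f'_{ij}[t]$ term of the statement being compact notation for $ht\sum_{j\neq i}\bar{f}'_{ij}$; the storage-SoC corollary similarly gives (\ref{closedsoc}) through the same substitution.

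The main obstacle is justifying optimality of the constant-flow trajectory when $f'_{ij}$ is sign-varying, since there the feasibility set for $\{f_{ij}[s]\}_s$ is a proper polytope and one must rule out non-constant alternatives that might reduce the storage swing. A clean way to dispatch this is a pairwise averaging argument: show that replacing $f_{ij}[s_1]$ and $f_{ij}[s_2]$ by their common mean preserves feasibility and never enlarges $\max_t(E_i[t]-\sum_{j\neq i}F_{ij}[t]) - \min_t(\cdots)$, then iterate to the constant limit. Alternatively, the DC power-flow identity (\ref{ptdf}) can be invoked to show that once both the line constraint is saturated and the storage SoC is chosen optimally per Theorem \ref{theoremcap}, the flow is essentially pinned down. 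The remainder of the proof is routine algebra.
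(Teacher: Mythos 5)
Your core derivation is exactly the route the paper intends (the paper states this corollary without a separate proof, as a direct consequence of Theorems \ref{theoremcap} and \ref{linecaplim}): at the capacity value of Theorem \ref{linecaplim} the line is operated at the constant cycle-mean flow $\bar f'_{ij}:=\frac{1}{N}\sum_{s=1}^{N}f'_{ij}[s]$, which is feasible since $|\bar f'_{ij}|\le\frac{1}{N}\sum_{s}|f'_{ij}[s]|=C_{ij}+c_{ij}$ and satisfies the SoC-balance condition (\ref{eqsoc1}); substituting $F_{ij}[t]=ht\,\bar f'_{ij}$ and $\underline{E}_i[t]=E_i[t]$ into (\ref{mincap}) and (\ref{minsoc}) gives (\ref{closedcap})--(\ref{closedsoc}). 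You are also right that the term $t\sum_{j\neq i}f'_{ij}[t]$ in the statement has to be read as $ht\sum_{j\neq i}\bar f'_{ij}$, i.e.\ the cumulative effect of the mean flow, not the instantaneous original flow at slot $t$.

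The place where your write-up promises more than it delivers is the optimality/forcing step for sign-varying $f'_{ij}$. When $f'_{ij}[s]$ has a constant sign your chain of inequalities is tight term by term, so $|f_{ij}[s]|=C_{ij}+c_{ij}$ with a fixed sign for every $s$ and the constant flow is genuinely forced; this is the regime the paper implicitly has in mind (it later asserts that "the system line flow is constant under sufficient transmission asset storage"). In the mixed-sign case, however, your pairwise-averaging claim --- that replacing $f_{ij}[s_1]$ and $f_{ij}[s_2]$ by their common mean never enlarges $\max_{t}(E_i[t]-\sum_{j\neq i}F_{ij}[t])-\min_{t}(E_i[t]-\sum_{j\neq i}F_{ij}[t])$ --- is asserted without proof and is not true in general: the swing depends on $E_i[t]$, and a non-constant feasible flow that tracks $E_i[t]$ can have a strictly smaller swing than the constant one, so averaging toward the constant can increase it. Consequently, in the mixed-sign case (\ref{closedcap}) is the storage required at the constant-mean-flow operating point (an upper bound on the minimum at bus $i$), and your argument becomes complete only if you either restrict to the constant-sign case, where the flow is pinned down, or present the corollary as evaluated at that constant-flow operating point rather than as optimality over all feasible flow trajectories; the invocation of (\ref{ptdf}) does not rescue this, since in the reformulated problem (\ref{meshedref}) the flows are free decision variables subject only to capacity and the cumulative-energy constraints.
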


\section{Discussion} \label{sec:dis}
\subsection{Scenario Generation}
\indent Both the conventional problem (\ref{meshedori}) and its reformulation (\ref{meshedref}) just reflect the optimal solution over one cycle $\mathcal{T}$. The investment in transmission lines and storage might be a long-term process. To make the decision suitable for future uncertainty, the final results typically need to be based on multiple likely scenarios. This can be done by extending the deterministic problem to many data-driven uncertainty programming approaches, including stochastic, robust, and chance-constrained optimization \cite{roald2023power,zhang2024efficient}. \\
\indent The generation of future nodal energy profile scenarios typically involves two steps. Initially, system planners estimate the expansion of load and renewable energy generation using historical data, weather forecasts, and other information such as electric vehicle penetration. This process may yield various scenarios, including those characterized by low or high wind conditions. Subsequently, to simulate the operational dynamics of the power grid, UC and ED models are employed to manage the output of dispatchable generators.

\subsection{Contingency Analysis}
Before considering contingencies, after integrating enough transmission asset storage into grids, the transmission planning paradigm shifts from designing for the \emph{peak} line flow to designing for the \emph{average} line flow of each cycle. Because transmission planning is a long-term process faced with high uncertainty, it is impossible to take all contingencies into account. Thus it is largely based on current grid status and engineer-defined criteria \cite{choi2007transmission}. To reduce the problem's complexity, contingency analysis is typically applied only to some serious scenarios. Supposing the operator has generated the load and renewable forecasts for future years, the question is how to select the serious scenarios.\\
\indent Assuming the storage charging cycle is one day, based on Theorem \ref{linecaplim}, the minimum line capacity needed is the original mean line flow value of each day. To select the serious scenarios (days), we first screen the minimum capacity of all the lines over the future years of interest and pick the maximum value for each line. Then, every line is associated with one specific day when it reaches the maximum mean line flow, which is then defined as a ``serious scenario" of the system. Because most of the line's maximum mean line flow happens on the same day, the total number of serious scenarios is far less than the total number of lines. This is illustrated in the case studies below.

\section{Case Studies} \label{sec:case}
Three test systems are implemented in this section for different purposes. First, a 2-bus toy textbook example is introduced to help readers obtain an intuitive understanding of the relationship between storage and transmission assets, especially the fundamental limits of both storage and transmission lines.\\
\indent In large-scale power systems, technology iterations and public concerns contribute to significant challenges in estimating parameters for models (\ref{meshedori}) or (\ref{meshedref}). While deregulation in the electricity market facilitates the system's autonomous approximation of the optimal point, the fundamental limits discussed in this paper, offer valuable insights for market mechanism designers and policymakers. \\
\indent To better unveil the transmission value of storage, the real-time power balance for the large system is assumed to be maintained by dispatchable generators. The RTS 24-bus system \cite{ordoudis2016updated} is used to further compare the minimum storage capacity under \emph{different} renewable output levels, where the efficiency of the reformulated problem (\ref{meshedref}) is also compared with the conventional problem (\ref{meshedori}). The more complicated RTS-GMLC system \cite{barrows2019ieee} and the Texas synthetic grid \cite{birchfield2016grid}, where the renewable and load profiles have been estimated for the \emph{whole} year, allow further contingency analysis based on serious scenarios.\\
\indent The charging and discharging cycle of storage is all assumed to be a day (24 hours) with a 4-hour duration\footnote{The storage duration means the amount of time storage can discharge at its power capacity before depleting its energy capacity.}. The basic simulation step is set to one hour, resulting in 24 simulation steps per day. After assuming or computing the bus profiles, all the problems are solved using 64 GB RAM on the Intel XEON-10885M CPU (2.4GHz). The mathematical models were formulated using YALMIP on Matlab R2023a and solved using Gurobi v9.5.
\subsection{2-Bus Example}
The 2-bus toy textbook example represents power transmission from a generation site (Bus 1) to a demand side (Bus 2). In this example, the generators, located in rural areas, are assumed to be renewable energy sources, whereas the demand represents a load center situated far from the generation site.
\begin{figure}[H]
\centering
  \includegraphics[scale=0.4]{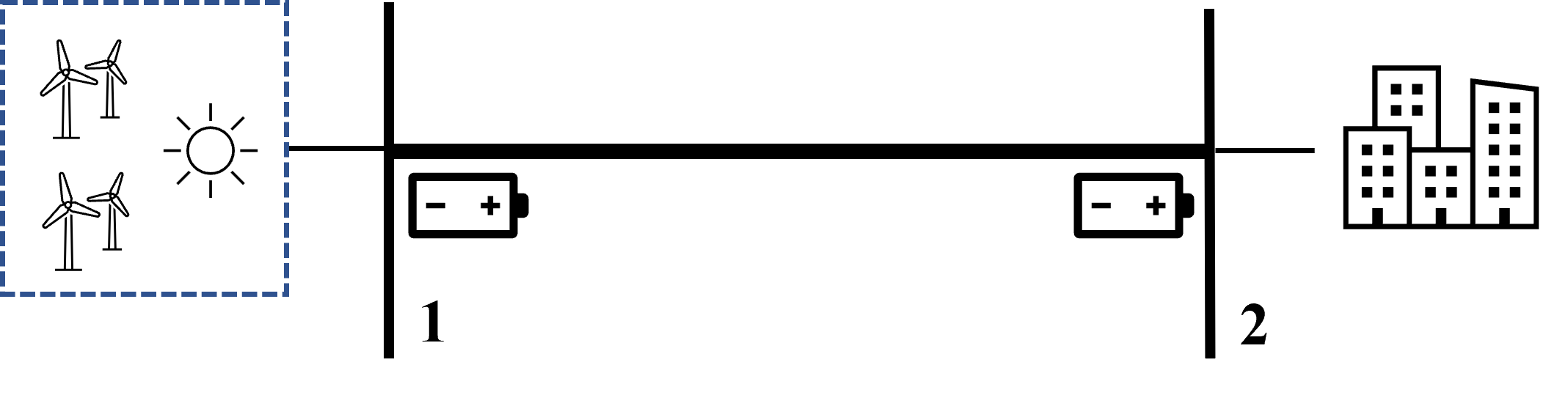}
  \caption{The 2-bus system}
  \label{2bus2}
\end{figure}
The system demand curve is the total load on the peak demand day of ERCOT in 2023 (August 10th) \cite{ercot}, while the supply curve is the total renewable generation output on the same day. To meet the energy balance assumption (Assumption \ref{eb}), both the demand and supply curves are proportionally adjusted to the same average value, i.e. 100 MW. Because the system is not guaranteed to have a real-time power balance, energy storage and a transmission line are both needed to avoid load shedding. This assumption is true for systems with high-penetration renewable resources and limited dispatchable generators. Fig. \ref{2bus2} illustrates this situation where the supply is solely from renewable resources.\\
\indent To unveil the substitutability between storage and transmission lines, the optimization problem (\ref{meshedsim}) is solved multiple times by tuning the cost parameters $\alpha_{12}$, $\beta_1$ and $\beta_2$. Assuming the existing transmission line capacity $C_{12} = 0$, the optimal storage and transmission line capacity under different cost parameters are shown as the three-dimensional surface in Fig. \ref{3D}, where the color of the surface represents the value of the cost ratio, defined as the mean cost of storage divided by the cost of the transmission line, i.e. $\frac{\beta_1+\beta_2}{2\alpha_{12}}$.
\begin{figure}[H]
\centering
  \includegraphics[scale=0.46]{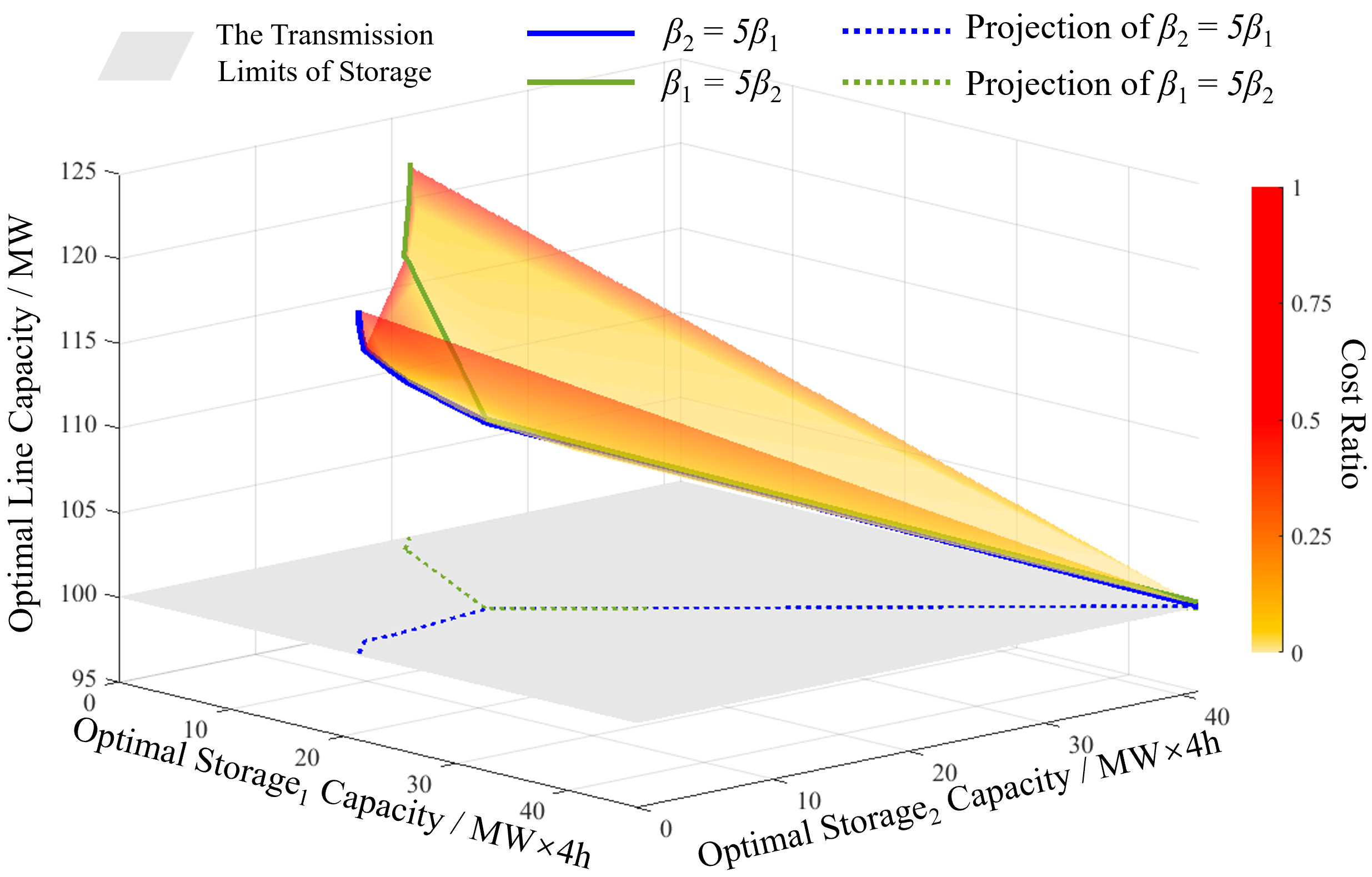}
  \caption{The optimal storage and transmission line capacity under different cost parameters}
  \label{3D}
\end{figure}
\indent The solutions reflect that there exists a minimum line capacity needed or the transmission limits of storage, which is illustrated as the grey plane. Furthermore, considering the cost of storage at different locations might be different, the two special cost cases $\beta_2 = 5\beta_1$ and $\beta_1 = 5\beta_2$ are depicted by blue and green curves respectively. 
\begin{figure}[H]
\centering
  \includegraphics[scale=0.45]{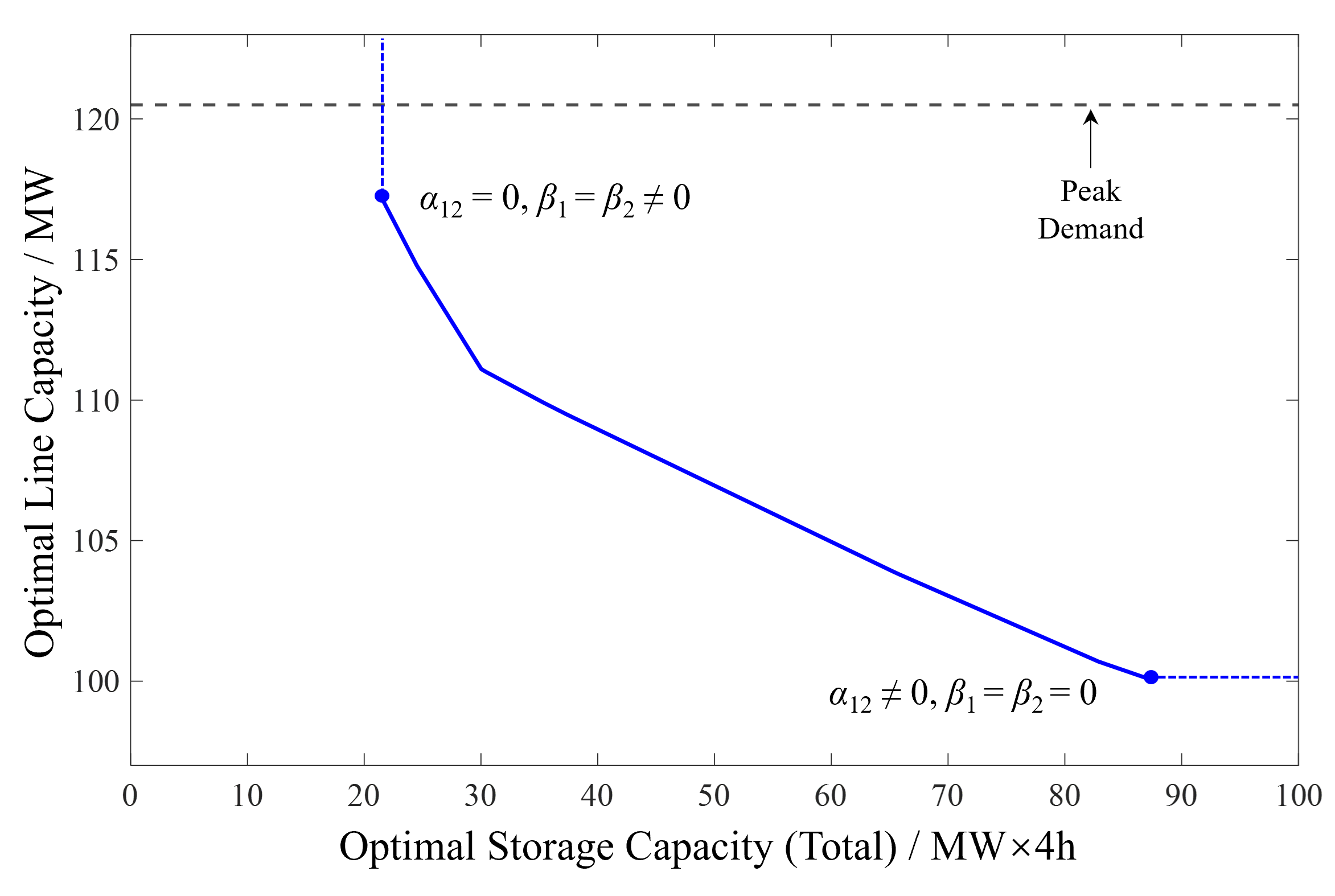}
  \caption{The total optimal storage and transmission line capacity under different cost parameters}
  \label{non-storage}
\end{figure}
\indent For better visualization of total storage capacity, we assume the cost of storage is equal in both buses, i.e. $\beta_1 = \beta_2$. The optimal solution is decided by the relative values of $\alpha_{ij}$ and $\beta_i$, i.e. $\frac{\alpha_{12}}{\beta_1}$ or $\frac{\alpha_{12}}{\beta_2}$. If the cost of energy storage is fixed, then increasing transmission line cost means installing more storage is a better solution, and vice versa. Fig. \ref{non-storage} illustrates the trade-off between storage and transmission lines under different cost situations.
\begin{figure}[H]
\centering
  \includegraphics[scale=0.46]{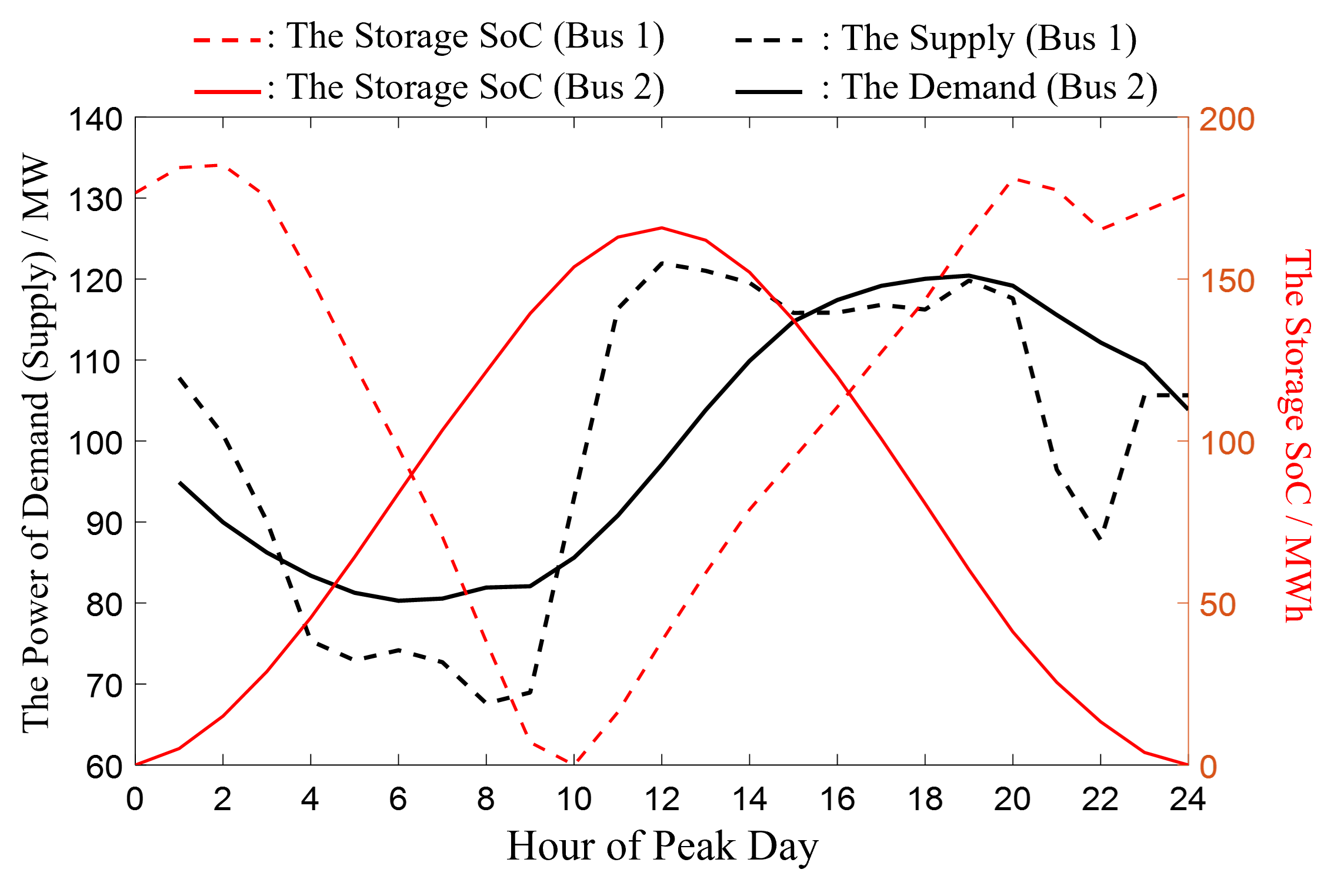}
  \caption{The storage SoC behavior under minimum transmission line capacity (2-bus system)}
  \label{non-soc}
\end{figure}
\indent It is clear that there exist some fundamental limitations for both storage and transmission lines. For the storage, even if the investment cost $\beta_1 = \beta_2 = 0$ and $\alpha_{12} \neq 0$, the system still cannot avoid load shedding or renewable curtailment solely relying on storage. The corresponding minimum transmission line capacity directly calculated from (\ref{mincap}) in Theorem \ref{linecaplim} is the same as obtained by the optimization program in Fig. \ref{non-storage}, that are both equal to the mean value of the load curve (100 MW). Furthermore, the storage capacity and SoC behavior also have explicit closed-forms (\ref{closedcap}) and (\ref{closedsoc}), which are shown in Fig. \ref{non-soc}.\\
\indent Similarly, if $\alpha_{12} = 0$ and $\beta_1 = \beta_2 \neq 0$, the system still needs minimum storage capacity to ensure a real-time power balance between supply and demand. It has the explicit closed-form (\ref{mincaptotal}) resulting in the same value as obtained by solving the optimization program, i.e. 21.4 MW$\cdot$4h. 

\subsection{Modified RTS 24-bus System}
The 24-bus System considered next is modified from the IEEE Reliability Test System (RTS-24) by the addition of six wind farms and one solar farm to model high renewable energy penetration \cite{ordoudis2016updated}. The wind generators are situated at buses 3, 5, 7, 16, 21, and 23, each having a capacity of 150MW, while the solar farm is connected to bus 22 with a capacity of 200MW (see Fig. 8 in \cite{zhang2024efficient}). Both wind and solar profiles are scaled from the total wind and solar output of ERCOT \cite{ercot}, and more renewable scenarios can also be generated based on the method mentioned in \cite{zhang2024efficient}.\\
\indent The modified RTS 24-bus is primarily used to compare different problem formulations under different renewable penetration scenarios. We select the dates 5.29.2022 (high wind) and 10.31.2022 (low wind) as two typical renewable generator output profiles, with averages of 69.2\% and 13.7\% wind output of the total wind capacity respectively. \\
\indent After solving the economic dispatch problem based on the \emph{same} load profile given in \cite{ordoudis2016updated} with different wind profiles, the total minimum line capacity barely varies: 4769.5 MW on high wind days and 4775.0 MW on low wind days. Fig. \ref{ori_cap} compares the original peak line flow value and the minimum line capacity needed on different days, where the difference represents the potential transmission ``value" of energy storage. 
\begin{figure}[H]
\centering
  \includegraphics[scale=0.4]{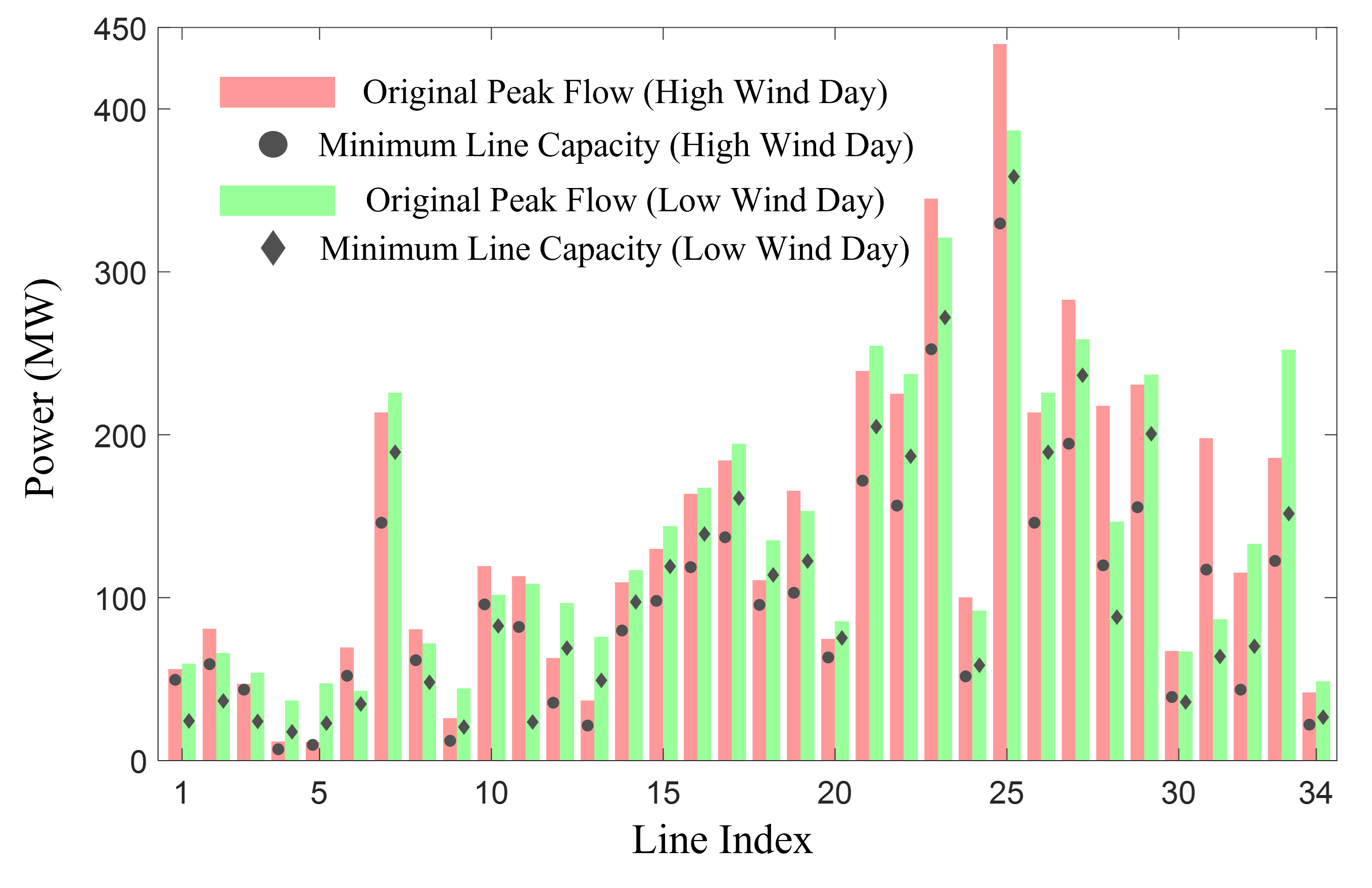} 
  \caption{The original peak line flow value and the minimum line capacity needed on different days}
  \label{ori_cap}
\end{figure}
\indent  If the operator wants to reduce the original peak line flow value by 5\% for each line, which amounts to a 239 MW total line capacity reduction, the minimum total storage capacity needed can be solved for either through the conventional optimization problem (\ref{meshedori}) or the reformulated problem (\ref{meshedref}) by replacing the objective function with $\sum_{i=1}^B S_i$. The comparison of these two methods is shown in TABLE \ref{t1}.
\begin{table}[H]
\caption{The minimal total storage capacity needed to reduce the peak line flow by 5\% based on different formulations}
\centering
\begin{tabular}{cccc}
\toprule
\textbf{Day}                        & \textbf{Method}      & \textbf{Storage Capacity} & \textbf{Solving Time} \\ \midrule
\multirow{2}{*}{\textbf{High Wind}} & Conventional (\ref{meshedori}) & 115.8 MW$\cdot$4h      & 0.26 s  \\
                                    & Reformed (\ref{meshedref}) & 115.8 MW$\cdot$4h      & 0.50 s   \\
\multirow{2}{*}{\textbf{Low Wind}}  & Conventional (\ref{meshedori}) & 100.2 MW$\cdot$4h      & 0.36 s  \\
                                    & Reformed (\ref{meshedref}) & 100.2 MW$\cdot$4h      & 0.48 s   \\\bottomrule
\label{t1}
\end{tabular}
\end{table}
\indent The minimal storage capacity is the same value though given by two different formulations (\ref{meshedori}) and (\ref{meshedref}), but the optimal solution for storage SoC at each time $t$ is not unique. The storage achieves the transmission value by acting as a \textit{flow response} technique to reshape the line flow profile. Because the line flow is the decision variable in (\ref{meshedref}), it is also helpful to quantify the transmission capability of storage for future market design problems.
\subsection{RTS-GMLC 73-bus System}
The RTS-GMLC 73 bus system incorporates contemporary generating resources equipped with distinct heat rate assignments derived from real-world data. Additionally, it integrates a geographically coherent 5-minute time series dataset for wind, solar, and load in the whole year 2020 \cite{barrows2019ieee}. All the data is downloaded from \cite{gmlc} without any changes, except that the DC transmission line in the original system is not considered in this paper.\\
\indent The energy profiles of each bus are solved by the unit commitment problem with a 36-hour look ahead window and ED with 15-minute resolution in advance with infinite transmission line limits for the whole year based on Prescient \cite{knueven2020novel}, an open-source modeling package written in Python for the large-scale power system UC and ED.
\begin{table}[H]
\caption{The total peak line flow value and minimum line capacity needed (\ref{meancap}) - RTS GMLC system}
\centering
\begin{tabular}{ccc}
\toprule
\textbf{Peak Line Flow} & \textbf{Minimum Line Capacity}  & \textbf{Reduced Percentage} \\ \midrule
30686.1 MW                     & 19401.7 MW                & 36.8 \%   \\ \bottomrule                           \label{year_peak_mean}                                
\end{tabular}
\end{table}
\indent Instead of only focusing on two specific days in Fig. \ref{ori_cap}, the minimum line capacity needed for the RTS GMLC system is screened throughout the whole year. After comparing the daily power flow value of each line, the maximum line flow value during the whole year is defined as the original peak line flow value, while the minimum line capacity needed is the maximum daily \emph{mean} line flow value over the year. TABLE \ref{year_peak_mean} compares these two results, which shows the benefit of designing transmission lines for the \emph{average} line flow (with sufficient storage) rather than for the \emph{peak} line flow.\\
\indent To select the serious scenarios for contingency analysis, each transmission line is associated with one serious day when its daily mean line flow value is maximum over the year. The system has 120 transmission lines, but the total number of ``serious" days is only 43 days, which illustrates the concentration property of serious scenarios. For example, 9 transmission lines reach their maximum daily mean line flow spontaneously on the same day (August 12th). The frequency of serious days over the year is shown in Fig. \ref{seriousday}.
\begin{figure}[H]
\centering
  \includegraphics[scale=0.4]{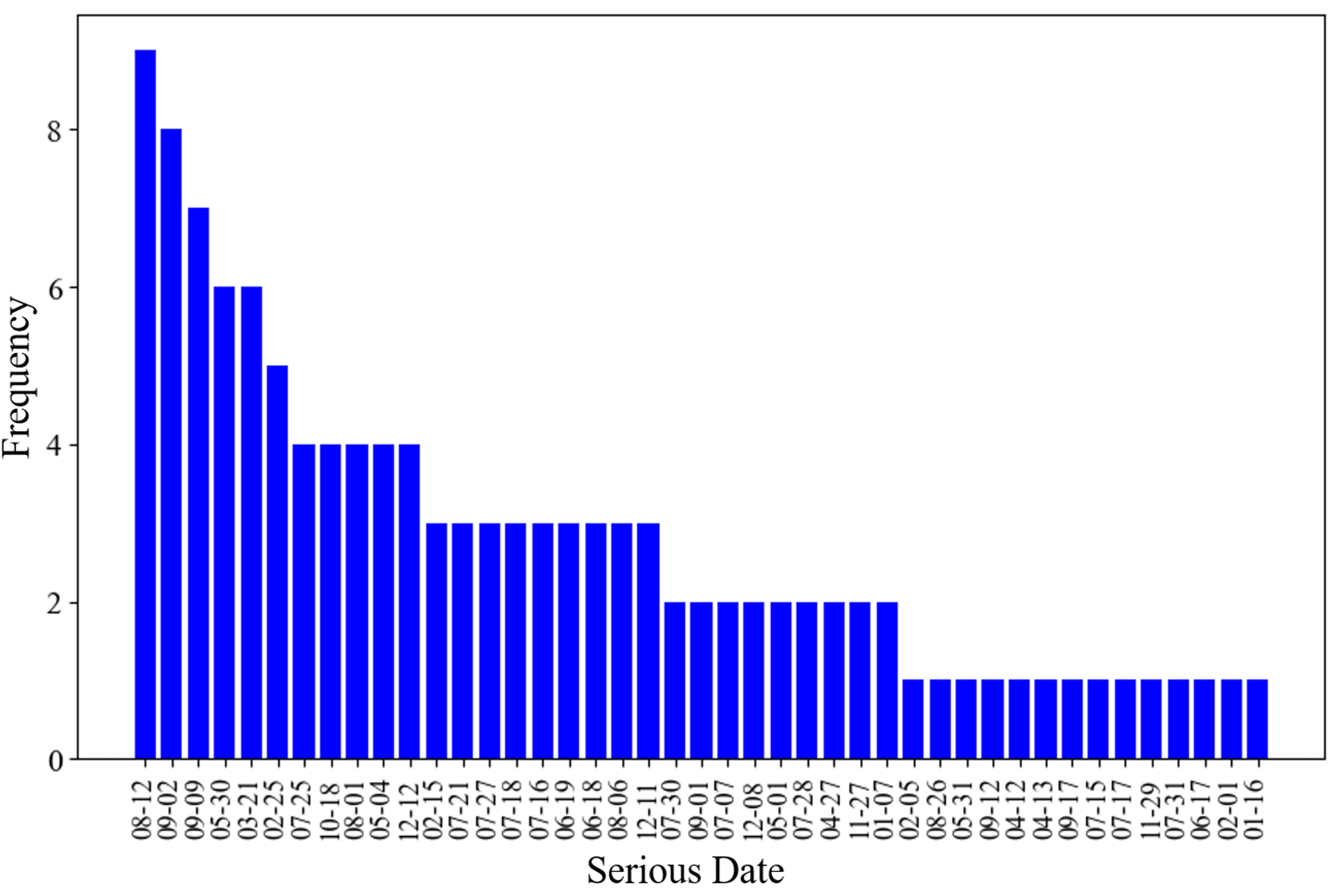} 
  \caption{The frequency of serious days- RTS GMLC system}
  \label{seriousday}
\end{figure}
\indent Because the system line flow is constant under sufficient transmission asset storage, the contingency analysis for each day is employed for the mean flow scenario. The $N$-1 transmission line contingency analysis is applied to the above 43 serious days, where both line and element tripping contingency are considered. Under the line tripping contingencies, the line capacity limit should meet security requirements for all possible $N$-1 transmission constraint situations. For the element tripping contingency, all the elements connected with the same bus are assumed to be cut off together, while the consequent power imbalance is covered proportionally by the elements at other buses.\\
\indent The minimum line capacity needed increases after considering the contingency scenarios, but the total capacity is \emph{still} less than the total peak line flow value, which is the line capacity needed without considering storage and contingency. If comparing the results with the given line capacity in \cite{gmlc}, about 50 \% of total line capacity is reduced, as shown in TABLE \ref{contingencycap}. 
\begin{table}[H]
\caption{The (total) fundamental line capacity limits considering $N$-1 contingencies}
\centering
\begin{tabular}{cccc}
\toprule
\multirow{2}{*}{\textbf{\begin{tabular}[c]{@{}c@{}}Contingency\\  Type\end{tabular}}}
&\multirow{2}{*}{\textbf{\begin{tabular}[c]{@{}c@{}}Minimum\\  Line Capacity\end{tabular}}} 
&\multirow{2}{*}{\begin{tabular}[c]{@{}c@{}}\textbf{Percentage Reduction From} \\Current Installed Capacity\tablefootnote{ Assumed to be the given parameter in \cite{gmlc}.}\end{tabular}} \\ \\ \midrule
\textbf{Line Trip}                & 24937.2 MW                 &   46.6\% \\ 
\textbf{Element Trip}             & 23099.4 MW                &   50.5\% \\ 
\textbf{Both}                     & 26389.2 MW               &   43.5\% \\  \bottomrule                           
 \label{contingencycap}                                
\end{tabular}
\vspace{-5pt}
\end{table}
\subsection{Synthetic Texas Grid}
We show the potential transmission value of energy storage on the latest 6716-bus synthetic grid on a footprint of Texas \cite{birchfield2016grid}. This grid consists of 731 generation units, with a portfolio of 475 gas, 23  coal, 4 nuclear, 22 hydro, 153 wind, and 36 utility-scale solar power plants. The forecast and actual solar, wind, and load data are produced by the U.S. Department of Energy (DOE) PERFORM Program, where we extract the whole year data from 09/01/2022 to 08/31/2023 for analysis \cite{performance}. The ED time interval is set to one hour after 36 hours look ahead UC to save computing time. The whole problem is also solved under infinite transmission line limits based on Prescient.
\begin{table}[H]
\caption{The total peak line flow value and minimum line capacity needed (\ref{meancap}) - Texas synthetic grid}
\centering
\begin{tabular}{ccc}
\toprule
\textbf{Peak Line Flow} & \textbf{Minimum Line Capacity}  & \textbf{Reduced Percentage} \\ \midrule
11.59 $\times 10^5$  MW                     & 9.01 $\times 10^5$ MW                & 22.3 \%   \\ \bottomrule                           \label{year_peak_mean2}                                
\end{tabular}
\vspace{-5pt}
\end{table}
\indent Similarly, after comparing the daily power flow value of each line, TABLE \ref{year_peak_mean2} compares the total peak line flow of the system and the total minimum line capacity. Without considering contingency, designing transmission lines for the \emph{average} line flow (with sufficient storage) rather than for the \emph{peak} line flow will bring a 22.3 \% reduction capacity based on this previous year's scenarios. \\
\indent The Texas synthetic grid, which comprises 9,140 transmission lines, experiences a similar concentration property where 80\% of the lines' maximum daily mean flow value occurs within just 22 days of the year. For instance, 1,287 transmission lines reach their maximum daily mean line flow spontaneously on the same day (December 23rd), and the frequency of these 22 serious days is shown in Fig. \ref{seriousday2}. 
\begin{figure}[H]
\centering
  \includegraphics[scale=0.4]{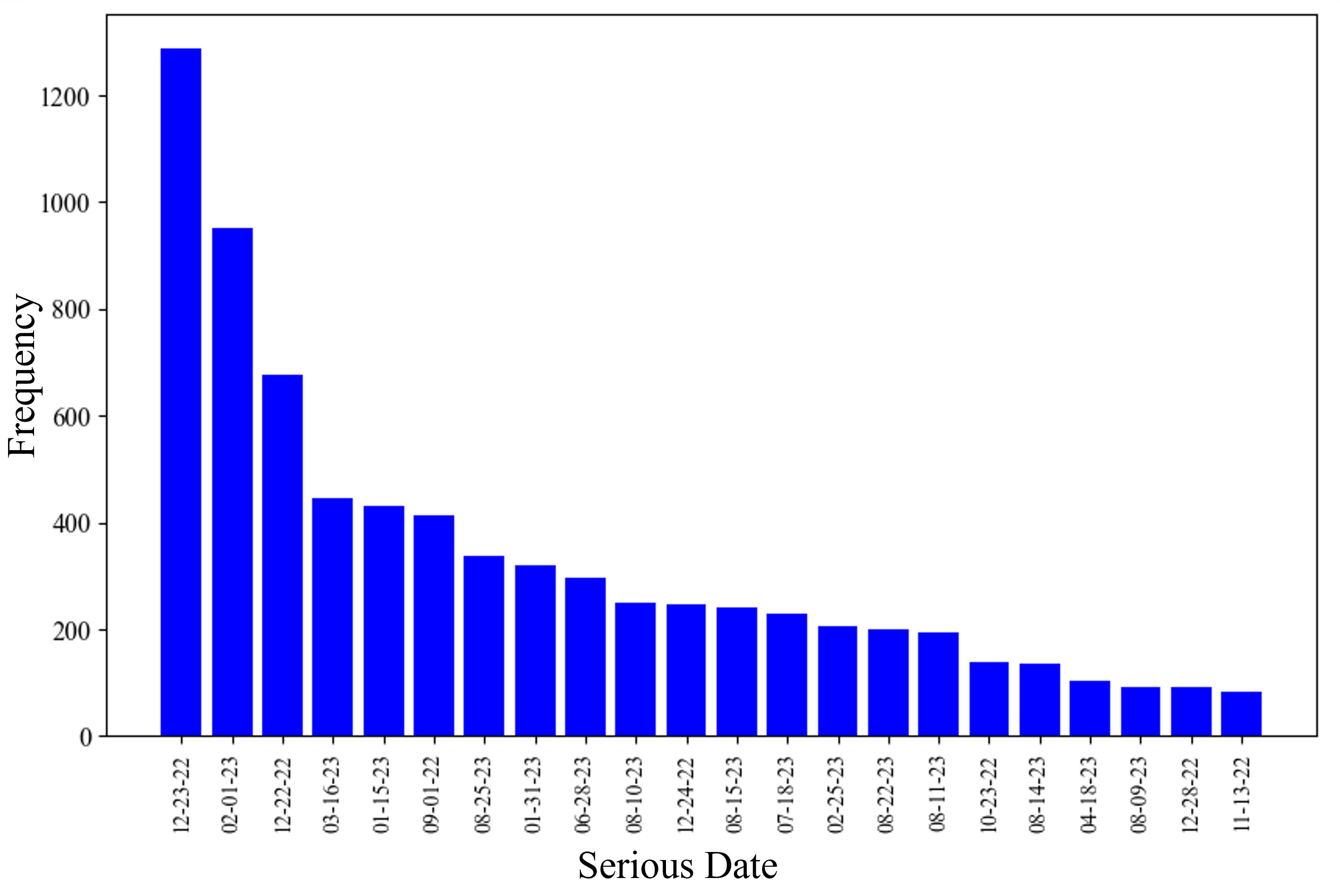} 
  \caption{The frequency of serious days - Texas synthetic grid}
  \label{seriousday2}
\end{figure}
\indent Considering the system size, the $N$-1 transmission line contingency analysis is only applied to the above 22 serious days, where both line and element tripping contingency are considered. The contingency analysis model is the same as with the RTS GMLC system. \\ 
\indent The current design of line capacity is intended to guarantee system safety during peak scenarios under contingency conditions. Consequently, compared with the line capacity data given in \cite{performance}, the transmission potential of storage is estimated to enable the reduction in line capacity requirements under various conditions: 46.3 \% when only considering line tripping, 62.8 \% when only accounting for element tripping, and 46.2 \% when both line and element tripping are taken into account. The detailed results are shown in TABLE \ref{contingencycap}. 
\begin{table}[H]
\caption{The (total) fundamental line capacity limits considering $N$-1 contingencies}
\centering
\begin{tabular}{ccc}
\toprule
\multirow{2}{*}{\textbf{\begin{tabular}[c]{@{}c@{}}Contingency\\  Type\end{tabular}}}
&\multirow{2}{*}{\textbf{\begin{tabular}[c]{@{}c@{}}Minimum\\  Line Capacity\end{tabular}}} 
&\multirow{2}{*}{\begin{tabular}[c]{@{}c@{}}\textbf{Percentage Reduction From} \\Current Installed Capacity\tablefootnote{Assumed to be the given parameter in \cite{performance}.} \end{tabular}} \\ \\ \midrule
\textbf{Line Trip}                & 13.99 $\times 10^5$ MW                 &  46.3 \% \\ 
\textbf{Element Trip}             & 9.69 $\times 10^5$ MW                  &   62.8\% \\ 
\textbf{Both}     & 14.01 $\times 10^5$ MW                 &   46.2\% \\  \bottomrule                           
 \label{contingencycap}                                
\end{tabular}
\vspace{-5pt}
\end{table}
\section{Conclusion} \label{sec:con}
This paper addresses the transmission value of energy storage by developing a cumulative energy-based model that elucidates the interconnection between storage and line flow. The fundamental limitations of both storage and transmission lines are revealed in an explicit closed-form way, which can be used to determine the minimum transmission capacity and the minimum storage capacity essential for the grid operation. The potential of storage to reduce required line capacity is valid in a wide range of systems with and without considering contingencies. \\
\indent To illuminate the universal relationship between energy storage and transmission infrastructure, our approach simplifies the model by retaining only essential constraints. Moving beyond the scope of centralized optimization scenarios currently considered, our future endeavors will aim at market design for the integration of energy storage as a transmission asset. The market incentive will focus on integrating the varied technological capabilities of energy storage and transmission lines to achieve an optimal mixture.

\bibliographystyle{IEEEtran}
\bibliography{ref.bib}

\end{document}